\newcommand{\cX}{\mathcal{X}}
\newcommand{\cY}{\mathcal{Y}}
\newcommand{\bbR}{\mathbb{R}}
\newcommand{\bbE}{\mathbb{E}}
\newcommand{\cL}{\mathcal{L}}
\newcommand{\cF}{\mathcal{F}}
\newcommand{\Bf}{\mathbf{f}}
\newcommand{\bx}{\mathbf{x}}
\newcommand{\bu}{\mathbf{u}}
\newcommand{\bp}{\mathbf{p}}
\newcommand{\bq}{\mathbf{q}}
\newcommand{\bC}{\mathbf{C}}
\newcommand{\bW}{\mathbf{W}}
\newcommand{\bR}{\mathbf{R}}
\newcommand{\btheta}{\bm{\theta}}
\newcommand{\bgamma}{\bm{\gamma}}
\newcommand{\blue}[1]{{\color{blue}{#1}}}
\newtheorem{assumption}{Assumption}
\newtheorem{theorem}{Theorem}
\newtheorem{lemma}{Lemma}
\newtheorem{definition}{Definition}
\newtheorem{remark}{Remark}
\newtheorem{proposition}{Proposition}
\newcommand{\blind}{0}
\begin{document}

\def\spacingset#1{\renewcommand{\baselinestretch}%
{#1}\small\normalsize} \spacingset{1}


\if0\blind
{
  \title{\bf Nonlinear Multivariate Function-on-function Regression with Variable Selection}
  \author{Haijie Xu\\
    Department of  Industrial Engineering, Tsinghua University, Beijing, China\\
    and \\
    Chen Zhang \thanks{
zhangchen01@tsinghua.edu.cn}\hspace{.2cm}\\
    Department of  Industrial Engineering, Tsinghua University, Beijing, China}
  \maketitle
} \fi

\if1\blind
{
  \bigskip
  \bigskip
  \bigskip
  \begin{center}
    {\LARGE\bf Nonlinear Multivariate Function-on-function Regression with Variable Selection}
\end{center}
  \medskip
} \fi

\bigskip
\begin{abstract}
This paper proposes a multivariate nonlinear function-on-function regression model, which allows both the response and the covariates can be multi-dimensional functions. The model is built upon the multivariate functional reproducing kernel Hilbert space (RKHS) theory. It predicts the response function by linearly combining each covariate function in their respective functional RKHS, and extends the representation theorem to accommodate model estimation. Further variable selection is proposed by adding the lasso penalty to the coefficients of the kernel functions. A block coordinate descent algorithm is proposed for model estimation, and several theoretical properties are discussed. Finally, we evaluate the efficacy of our proposed model using simulation data and a real-case dataset in meteorology.
\end{abstract}

\noindent%
{\it Keywords:}  Reproducing kernel Hilbert space, Function-on-function regression, Variable selection, Nonlinear regression, Lasso penalty.
\vfill

\newpage
\spacingset{1.9} 
\section{INTRODUCTION}
Functional regression is an important research topic in functional data analysis. Based on whether the covariates and response variables are scalars or functions, functional regression problems can be broadly categorized into three types: scalar-on-function regression \citep{cai2006prediction,reiss2007functional,malloy2010wavelet,mclean2014functional,liu2017estimating,sang2018sparse}, function-on-scalar regression \citep{morris2006wavelet,staicu2012modeling,zhu2012multivariate,zhang2015varying}, and function-on-function regression \citep{yao2005functional,ivanescu2015penalized,meyer2015bayesian,wang2014linear,luo2016functional,luo2017function,imaizumi2018pca,sun2018optimal,chiou2014multivariate}.

In this paper, we mainly focus on function-on-function regression with multiple covariates as follows,
\begin{equation}
\label{eq:F2F}
    y_i = f(\mathbf{x}_i) + \epsilon_i,\quad i = 1,\cdots,n,
\end{equation}
where $f$ is a mapping function we want to estimate. $\{(y_i, \bx_i)_{i=1}^n\}$ are $n$ pairs of functional samples. In particular, $y_i \in \cY = \{y:\Omega_y \rightarrow \bbR \}$ is the response function and $\Omega_y$ is the compact subset of $\bbR^{d_y}$. $\bx_i = (x_i^{(1)},\cdots,x_i^{(p)})$ are the covariate functions with $x_i^{(l)} \in \cX_l = \{x:\Omega_{x_l}\rightarrow \bbR \}$. $\Omega_{x_l}$ is the compact subset of $\bbR^{d_l},l = 1,\cdots,p$. 

So far no literature has addressed such a general regression form as Eq. (\ref{eq:F2F}). Most of the aforementioned studies consider that covariate and response variables are one-dimensional functions, i.e., $d_y=1$, $d_{l}=1$ and $p=1$. and use linear models for $f$ to describe the function-on-function correlation relationships. In particular,
\cite{yao2005functional,muller2008functional,wang2014linear} respectively proposed the sparse function-on-function regression model, functional additive model, and functional linear mixed model. \cite{ivanescu2015penalized,scheipl2016identifiability} developed penalized spline approaches, and discussed the identifiability issues in functional regression. For $p>1$, \cite{chiou2014multivariate, luo2016functional,luo2017function} considered multivariate function-on-function regression methods by expanding functions onto different bases, respectively.
When $p$ is high and not all the covariate functions are related to $y_i$, variable selection should be considered. \cite{hong2011inference} proposed a variable selection method that added LASSO penalty to the scalar regression coefficients for a multiple functional linear model. However, due to its scalar regression coefficients, it cannot capture the complete regression relationship between the response and the covariate functions. To overcome this issue, \cite{cai2021variable} assumed the regression coefficient to be a two-dimensional function, which can be estimated via functional principal component analysis (FPCA), and used the group smoothly clipped absolute deviation (SCAD) regularization to the F-norm of the coefficient matrices for variable selection. However, most of the above studies are still restricted to cases with $d_l = 1,  l = 1,\ldots, p$. 

To handle function-on-function regression with $d_l >1$, the most common way is to discretize each dimension of a function into a grid and then formulate it as a corresponding multi-dimensional tensor. Then tensor regression methods can be used to approximate the regression. \cite{zhou2013tensor} was the first to use tensor PARAFAC/CANDECOMP (CP) decomposition to establish a regression model with a scalar response and a single covariate tensor. Then \cite{lock2018tensor} further employed CP decomposition to establish a tensor-on-tensor regression model, allowing both the covariate and response to be tensors. When dealing with multiple tensor covariates, \cite{gahrooei2021multiple} proposed a multivariate tensor-on-tensor regression model using Tucker decomposition to reduce the number of parameters to be estimated. However, such discretization cannot utilize the continuous profile feature of functional data and may lead to information loss. 
In addition, to our best knowledge, there is no work addressing variable selection for tensor regression. 

Furthermore, most of the above methods are built upon linear models, and cannot capture nonlinear regression relationships. To consider the nonlinearity in $f$, the reproducing kernel Hilbert space (RKHS) is a common technique. \cite{lian2007nonlinear} first introduced the functional RKHS (FRKHS), as an extension of RKHS. FRKHS extended the scalar-valued kernel to the operator-valued kernel to formulate the nonlinear function-on-function regression. Then \cite{kadri2016operator} further extended it by presenting novel theoretical results, along with a discussion on the structure of the operator-valued kernel. 
However, it still focuses on the single covariate case with $p=1$ and $d_l=1$. Extending it to cases with $p>1$ incorporating variable selection is nontrival. It is to be noted that there exists some work in the RKHS framework considering selecting the most important functional segments for scalar-on-function regression with single functional covariate \citep{bueno2018variable,wang2022estimation}. However, their research problem is quite different from ours and the methods are not applicable in our problem.

In this paper, we propose a multivariate functional RKHS (MF-RKHS) regression model, which addresses the challenges of nonlinear function-on-function regression with multivariate functional covariates and incorporates variable selection. Our main contributions are as follows: 1) We extend the classical operator-valued RKHS method \citep{kadri2016operator} to function-on-function regression with multiple covariates. It predicts the response function by linearly combining each covariate function in their respective RKHS, and extends the representation theorem to accommodate our model estimation in this context. 2) We introduce model regularization \citep{tibshirani1996regression} on covariates in their respective RKHS. This enables the model to discard redundant covariates for variable selection. 3) We employ the block coordinate descent (BCD) algorithm for model parameter estimation and optimization. Some favorable theoretical properties of both the model and the optimization algorithm are established. 

The remaining sections of the article are outlined as follows: Section \ref{sec:method} presents the proposed MF-RKHS in detail. Section \ref{sec:computation} details the optimization algorithm. Section \ref{sec:theory} demonstrates the properties of our model and algorithm. Section \ref{sec:simulation} conducts numerical simulations and compares the results with baseline methods. Section \ref{sec:case} validates the model's accuracy using meteorological datasets. Finally, Some conclusive remarks are given in Section \ref{sec:conslusion}.

\section{MULTIVARIATE FUNCTIONAL RKHS}
\label{sec:method}
\subsection{Model formulation}
\label{subsec:model}
In this paper, we consider both $\cY$ and $\cX_i$ are in the square-integrable spaces $L^2(\Omega_y)$ and $L^2(\Omega_{x_i})$ respectively, and have the following definitions. 
\begin{definition}[Operator-valued kernel]
\label{def:op kernel}
   Denote the set of bound linear operators from $\cY$ to $\cY$ as $\cL(\cY)$. An $\cL(\cY)$-valued kernel $K$ on $\cX^2$ is a function $K( \cdot,\cdot): \cX \times\cX \rightarrow \cL(\cY)$. 
   
    (i) $K$ is Hermitian if $\forall w,z \in \cX$, $K(w,z) = K(z,w)^{*}$, where the $^{*}$ denotes the adjoint operator; 
    
    (ii) $K$ is nonnegative on $\cX$ if it is Hermitian and for every $r \in \mathbb{N}$ and all $\{(w_i,u_i)_{i = 1,\cdots,r} \}\in \cX \times \cY$, the matrix with its $ij$-th entry $\langle K(w_i,w_j)u_i, u_j \rangle_{\cY}$ is positive definite.
\end{definition}
\begin{definition}[Function-valued RKHS]
\label{def:funRKHS}
    A Hilbert space $\cF$ of functions from $\cX$ to $\cY$ is called a reproducing kernel Hilbert space if there is a nonnegative $\cL(\cY)$-valued kernel $K$ on $\cX^2$ such that:

    (i) the function $z \rightarrow K(w,z)g$ belongs to $\cF$, $\forall z, w \in \cX$ and $g \in \cY$;

    (ii) for every $F \in \cF, w \in \cX$ and $g \in \cY$, $\langle F,K(w,\cdot)g \rangle_{\cF} = \langle F(w), g \rangle_{\cY}$.
\end{definition}

Based on the two definitions, we can define $p$ operator-valued kernel functions $K_l(\cdot,\cdot):\cX_l \times \cX_l \rightarrow \cY, l = 1,\cdots,p$, and the corresponding $p$ functional-valued RKHS $\cF_l, l = 1,\cdots,p$. Then we assume Eq. (\ref{eq:F2F}) has the following additive form:
\begin{equation}
\label{eq:general reg}
    y_i = \sum_{l = 1}^p \alpha_l f_l(x_i^{(l)}) + \epsilon_{i}, \ i = 1,\cdots,n,
\end{equation}
where $f_l \in \cF_l, l = 1,\cdots,p$. $\epsilon_i \in L^2(\Omega_y) $ is the Gaussian white noise function. We assume $\epsilon_i(t) \sim N(0,\sigma^2), \forall t \in \Omega_y$ and $E(\epsilon_i(t)\epsilon_j(s)) = \sigma^2\mathbbm{1}_{i=j}\mathbbm{1}_{s= t}$ where $\mathbbm{1}$ denotes the characteristic function. Eq. (\ref{eq:general reg}) extends RKHS regression to multivariate cases by employing a linear combination of functions from multiple functional-valued RKHS spaces. Since each $f_l$ already encompasses nonlinear relationships, adopting such a linear additive approach for $y_i$ is reasonable to prevent the model from becoming overly complex. In this context, $\alpha_l \in \bbR_{+}$ represents the additive coefficients for different $f_l(x_{i}^{(l)}),l=1,\ldots, p$. 
\begin{remark}
\label{remark1}
Consider that an inappropriate operator-valued kernel $K$ may introduce problems, such as non-invertibility, in model estimation. To avoid such problems, we will mainly focus on kernel functions with the following detachable formulation,
    \begin{equation}
    \label{eq:separable kernel}
        K(w,z) = g(w,z)T \quad \forall w,z \in \cX,
    \end{equation}
    where $g(\cdot,\cdot): \cX \times \cX \rightarrow \bbR$ is a scalar-valued kernel function and $T \in \cL(\cY)$. This separable kernel construction is adapted from \cite{micchelli2004kernels,micchelli2005learning}. For example, $g$ can be Gaussian kernel as $g(w,z) = e^{||w-z ||_{\cX}^2/\sigma_g^2}$ and $T$ can be the integral operator as $Tu(t) = \int k(s,t)u(s)ds$.
\end{remark}

\begin{remark}
    \cite{carmeli2006vector} provided that $\cF$ is a subspace of $\mathcal{C}(\cX,\cY)$, i.e., the vector space of continuous functions from $\cX$ to $\cY$, if and only if $K$ is locally bounded and separately continuous. As such, in this paper we only consider separable RKHS $\cF \in \mathcal{C}(\cX,\cY)$. This setting is also used in \cite{kadri2016operator}.
\end{remark}
\subsection{Model estimation}
Given $n$ samples, $\{(y_i,\bx_i)_{i=1}^n\} \in \cY \times \cX$, where $\cX \triangleq \cX_1 \times \cdots \times \cX_p$, $f_{l}, \forall l = 1,\ldots, p$ can be estimated by solving the following optimization problem:
\begin{equation}
\label{eq:model 1}
    \hat{f}_{1},\ldots,\hat{f}_{p} = \mathop{\arg\min}\limits_{f_{l}, \forall l = 1,\ldots, p} \sum_{i = 1}^{n}||y_i - \sum_{l = 1}^p \alpha_l f_l(x_i^{(l)}) ||^2_{\cY} + \lambda_1 \sum_{l = 1}^{p} \theta_l || f_{l}||_{\cF_l}^2.
\end{equation}
$\lambda_1 \in \bbR_{+}$ is the regularization parameter, and $\theta_l \in \bbR_{+}$ is the weight adjustments for different covariates. We temporarily assume $\alpha_i$, $\theta_l$ and $\lambda_1$ are all predetermined hyperparameters now. 

\begin{theorem}[Representer theorem]
\label{thm:representer}
    The optimized $\hat{f}_l,\forall l = 1,\ldots, p$ in Eq. (\ref{eq:model 1}) can be represented in the following form:
    \begin{equation}
        \hat{f}_l(\cdot) = \sum_{j = 1}^n K_l(x_j^{(l)},\cdot)u_j^{(l)},
    \end{equation}
    where $u_j^{(l)} \in \cY$.
\end{theorem}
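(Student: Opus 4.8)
The plan is to run the classical orthogonal-decomposition argument behind the representer theorem, now in the function-valued, multi-block setting of Eq.~(\ref{eq:model 1}), mirroring the single-covariate case in \cite{kadri2016operator}. For each $l$ put
\[
\cF_l^0 \;:=\; \Big\{\, \textstyle\sum_{j=1}^n K_l(x_j^{(l)},\cdot)\,u_j \;:\; u_j \in \cY \,\Big\} \;\subseteq\; \cF_l,
\]
which is a linear subspace of $\cF_l$ by part (i) of Definition~\ref{def:funRKHS}, and decompose any candidate as $f_l = f_l^0 + f_l^\perp$ with $f_l^0 \in \cF_l^0$ and $f_l^\perp \in (\cF_l^0)^\perp$. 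The key structural observation is that the blocks decouple in Eq.~(\ref{eq:model 1}): the fidelity term depends on $f_l$ only through the $n$ evaluations $f_l(x_i^{(l)})$, and the penalty is a sum of individual squared norms.

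First I would show the fidelity term does not see $f_l^\perp$. For every $i$ and every $g \in \cY$, the function $K_l(x_i^{(l)},\cdot)g$ belongs to $\cF_l^0$, so the reproducing property (Definition~\ref{def:funRKHS}(ii)) gives $\langle f_l^\perp(x_i^{(l)}), g\rangle_{\cY} = \langle f_l^\perp,\, K_l(x_i^{(l)},\cdot)g\rangle_{\cF_l} = 0$; as $g$ is arbitrary, $f_l^\perp(x_i^{(l)}) = 0$, hence $f_l(x_i^{(l)}) = f_l^0(x_i^{(l)})$. Therefore replacing every $f_l$ by its component $f_l^0$ leaves $\sum_{l=1}^p \alpha_l f_l(x_i^{(l)})$ unchanged for each $i$, and with it the whole first sum in Eq.~(\ref{eq:model 1}). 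Second, for the penalty, the Pythagorean identity in $\cF_l$ yields $\|f_l\|_{\cF_l}^2 = \|f_l^0\|_{\cF_l}^2 + \|f_l^\perp\|_{\cF_l}^2 \ge \|f_l^0\|_{\cF_l}^2$, with equality iff $f_l^\perp = 0$. Since $\lambda_1$ and the $\theta_l$ are positive, the objective of Eq.~(\ref{eq:model 1}) evaluated at $(f_1^0,\dots,f_p^0)$ is no larger than at $(f_1,\dots,f_p)$, and strictly smaller unless $f_l^\perp = 0$ for all $l$. Hence a minimizer $(\hat f_1,\dots,\hat f_p)$ necessarily has $\hat f_l = \hat f_l^0 \in \cF_l^0$ for every $l$, i.e. $\hat f_l(\cdot) = \sum_{j=1}^n K_l(x_j^{(l)},\cdot)u_j^{(l)}$ with $u_j^{(l)} \in \cY$, which is the claim.

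The step that needs genuine care — and the only place where the kernel's structure enters — is the implicit assertion that $\cF_l^0$ is a \emph{closed} subspace, so that the orthogonal decomposition actually places $f_l^0$ in $\cF_l^0$ rather than merely in its closure; otherwise the finite-sum form is not guaranteed. Here I would invoke the detachable form of Remark~\ref{remark1}, $K_l(w,z) = g_l(w,z)\,T_l$ with $T_l \in \cL(\cY)$ self-adjoint and invertible. Using the reproducing property, $\big\|\sum_{j} K_l(x_j^{(l)},\cdot)u_j\big\|_{\cF_l}^2 = \sum_{i,j} g_l(x_i^{(l)},x_j^{(l)})\,\langle T_l u_i, u_j\rangle_{\cY} = \big\langle \mathbf{u},\,(G_l \otimes T_l)\,\mathbf{u}\big\rangle$, where $\mathbf{u} = (u_1,\dots,u_n)$ and $G_l = [\,g_l(x_i^{(l)},x_j^{(l)})\,]$. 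Assuming distinct design points and a strictly positive-definite scalar kernel $g_l$ so that $G_l \succ 0$, together with $T_l \succeq c I$ for some $c>0$, the generating map $\mathbf{u} \mapsto \sum_j K_l(x_j^{(l)},\cdot)u_j$ is bounded below and hence has closed range, giving $\cF_l^0$ closed. Existence of a minimizer — presupposed by the statement — can be obtained separately from coercivity and weak lower semicontinuity of the objective on $\prod_{l=1}^p \cF_l$.
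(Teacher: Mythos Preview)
Your argument is correct and follows the classical orthogonal-projection route to the representer theorem: decompose each $f_l$ into a component in the finite span $\cF_l^0$ and an orthogonal remainder, observe via the reproducing property that the remainder vanishes at every design point (hence is invisible to the data-fit term), and use Pythagoras to see it can only inflate the penalty. The paper takes a different path: it computes the Gateaux derivative of the objective block-by-block, upgrades it to the Frechet derivative by citing \cite{kurdila2006convex}, and solves the stationarity equation $D_F J_l(\hat f_l)=0$ directly. The solution emerges already in finite-sum form, with the explicit identification $u_j^{(l)}=\frac{\alpha_l}{\theta_l\lambda_1}\bigl(y_j-\sum_{h}\alpha_h\hat f_h(x_j^{(h)})\bigr)$.

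Your approach is more elementary---no differentiability is needed, only Hilbert-space geometry---and is the standard textbook argument. The paper's derivative calculation buys two things you do not get for free: it sidesteps the closedness-of-$\cF_l^0$ issue you correctly flag (the finite-sum form falls out of $\nabla J_l=0$ without ever projecting), and it yields an explicit formula for the coefficients that feeds straight into the reformulation in Eq.~(\ref{eq:model 2}) and Proposition~\ref{thm:only u}. One remark on your closedness step: the hypothesis $T_l\succeq cI$ is a valid sufficient condition, but it is in tension with the paper's later Assumption~\ref{ass:inverse operator} that $T$ has finite rank. Under finite rank the span $\cF_l^0$ is itself finite-dimensional (at most $n\cdot\mathrm{rank}\,T$), so closedness is automatic; that would be the cleaner justification in this paper's setting.
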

    The proof of Theorem \ref{thm:representer} is shown in Appendix 1. Based on it, we can reformulate Eq. (\ref{eq:model 1}) and optimize $\bu \in (\cY)^{n\times p}$ with its $jl$-th entry $u_j^{(l)}$ via
\begin{equation}
\label{eq:model 2}
    \begin{aligned}
    \hat{\bu} = \mathop{\arg\min}\limits_{\bu \in (\cY)^{n\times p}} &\sum_{i = 1}^{n} ||y_i - \sum_{l = 1}^{p}\sum_{j = 1}^{n}\alpha_l K_l(x_j^{(l)},x_i^{(l)})u_j^{(l)} ||_{\cY}^2 +& \lambda_1\sum_{l = 1}^{p} \theta_l \sum_{i,j = 1}^{n} \langle K_l(x_i^{(l)},x_j^{(l)})u_i^{(l)}, u_j^{(l)} \rangle_{\cY}.
    \end{aligned}
\end{equation}
The dimension of $\bu$ is $np$, which leads to excessive degrees of freedom. Consider $u_{j}^{(l)}, l=1,\ldots, p$ can also reflect the weights of different covariate functions, which have duplicated effects as $\alpha_{l}$, and may lead to model un-identifiability. Consequently, we propose the following proposition. 

\begin{proposition}
    \label{thm:only u}
    If setting $\alpha_l = \theta_l, \forall l = 1,\ldots, p$ in Eq. (\ref{eq:model 2}), we can get
    \begin{equation}
        \hat{u}_i^{(1)} = \cdots = \hat{u}_i^{(l)},  i = 1,\cdots,n.
    \end{equation}
\end{proposition}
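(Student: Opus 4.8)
The plan is to write down the first-order optimality conditions of the convex program (\ref{eq:model 2}) and to show that, once $\alpha_l=\theta_l$, these conditions force each block $\hat{u}^{(l)}_i$ to equal a quantity that does not depend on $l$.

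First I would note that the objective in (\ref{eq:model 2}), call it $J(\bu)$, is convex on $(\cY)^{n\times p}$: the fit term is a convex quadratic composed with a bounded linear map, and the regularizer $\sum_l\theta_l\sum_{i,j}\langle K_l(x_i^{(l)},x_j^{(l)})u_i^{(l)},u_j^{(l)}\rangle_\cY=\sum_l\theta_l\|f_l\|_{\cF_l}^2$ is a nonnegative quadratic form by Definition \ref{def:op kernel}(ii); hence stationarity characterizes the global minimizers. Taking the G\^ateaux derivative of $J$ at a minimizer $\hat{\bu}$ in an arbitrary direction $\mathbf{h}=(h_j^{(l)})$, writing $r_i=y_i-\sum_l\alpha_l\hat{f}_l(x_i^{(l)})$ for the residuals and $\hat{f}_l(\cdot)=\sum_jK_l(x_j^{(l)},\cdot)\hat{u}_j^{(l)}$, and using the reproducing property together with the Hermitian identity $K_l(w,z)^*=K_l(z,w)$ to move operators across the $\cY$-inner products, the stationarity condition decouples over $(l,j)$ into
\begin{equation*}
\alpha_l\sum_{i=1}^nK_l(x_i^{(l)},x_j^{(l)})\,r_i \;=\;\lambda_1\theta_l\,\hat{f}_l(x_j^{(l)}),\qquad j=1,\dots,n,\;\; l=1,\dots,p.
\end{equation*}
The only delicate point here is getting the symmetrization of the derivative of the quadratic regularizer right; the rest is adjoint bookkeeping.

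Next I would set $\alpha_l=\theta_l>0$, cancel the common factor, and substitute $\hat{f}_l(x_j^{(l)})=\sum_iK_l(x_i^{(l)},x_j^{(l)})\hat{u}_i^{(l)}$ to rewrite the display as $\sum_iK_l(x_i^{(l)},x_j^{(l)})\,c_i^{(l)}=0$ for every $j$, where $c_i^{(l)}:=r_i-\lambda_1\hat{u}_i^{(l)}$; that is, the block Gram operator $[K_l(x_i^{(l)},x_j^{(l)})]_{ij}$ on $(\cY)^n$ annihilates $(c_i^{(l)})_i$. Under the separable form of Remark \ref{remark1}, $K_l(w,z)=g_l(w,z)T_l$, this block operator equals $\mathbf{G}_l\otimes T_l$ with $\mathbf{G}_l=[g_l(x_i^{(l)},x_j^{(l)})]_{ij}$ the scalar Gram matrix, so $T_l\big(\sum_ig_l(x_i^{(l)},x_j^{(l)})c_i^{(l)}\big)=0$ for each $j$; assuming $T_l$ injective and $\mathbf{G}_l$ nonsingular (distinct $x_i^{(l)}$ and a strictly positive-definite scalar kernel such as the Gaussian one), this forces $c_i^{(l)}=0$, i.e.\ $\hat{u}_i^{(l)}=r_i/\lambda_1$. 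Since $r_i$ does not involve $l$, this yields $\hat{u}_i^{(1)}=\dots=\hat{u}_i^{(p)}$, as claimed. I expect this inversion to be the real obstacle: it is exactly where the restriction to separable, non-degenerate kernels flagged in Remark \ref{remark1} is needed, since without it the representation $\hat{f}_l=\sum_iK_l(x_i^{(l)},\cdot)\hat{u}_i^{(l)}$ need not determine the $\hat{u}_i^{(l)}$ uniquely and the equalities could fail even though the fitted functions are well defined.
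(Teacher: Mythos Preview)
Your approach is essentially the same as the paper's: both compute the G\^ateaux first-order conditions of (\ref{eq:model 2}) and observe that, once $\alpha_l=\theta_l$, every block $\hat u_i^{(l)}$ is tied to the common residual $r_i$ and hence independent of $l$. The paper's Appendix~2 simply writes down the post-inversion identity
\[
y_i=\sum_{l'}\sum_j \alpha_{l'}K_{l'}(x_j^{(l')},x_i^{(l')})\hat u_j^{(l')}+\lambda_1\frac{\theta_l}{\alpha_l}\,\hat u_i^{(l)}
\]
and reads off the conclusion, whereas you derive the intermediate block-Gram equation and make the required inversion explicit; your version is the more careful of the two.

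One remark worth keeping in mind: the injectivity of $T_l$ that you invoke is actually \emph{incompatible} with the paper's standing Assumption~\ref{ass:inverse operator}, where $T$ has finite rank on an infinite-dimensional $\cY$. In that setting the minimizer of (\ref{eq:model 2}) is not unique (components of $u_i^{(l)}$ in $\ker T$ are invisible to the objective), so the correct reading of the proposition is existential: among all minimizers one can \emph{choose} the $\hat u_i^{(l)}$ equal across $l$, namely $\hat u_i^{(l)}=r_i/\lambda_1$. Your final paragraph already flags exactly this non-uniqueness, so you have identified the right caveat; just phrase the conclusion as an existence statement rather than a forced equality under the paper's finite-rank hypothesis.
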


    The proof of Proposition \ref{thm:only u} is shown in Appendix 2. Based on it, we can set $u_i \triangleq u_i^{(l)},  l = 1,\cdots,p$ and further optimize our model by regarding $\bm \theta \in \bbR^{p}_{+}$ with its $l$-th entry $\theta_l$ as parameters to be estimated. This can on the one hand separate the influence of $\mu_{i}$ and $\alpha_{l}$. On the other hand, by regularizing the non-zero values of $\bm \theta$, we can select the most important covariate functions. In this paper, we add lasso penalty on $\bm \theta$,
and consequently transfer Eq. (\ref{eq:model 2}) to the following optimization problem:
\begin{equation}
\label{eq:model 3}
    \begin{aligned}
        (\hat{\bu},\hat{\bm{\theta}}) = \mathop{\arg\min}\limits_{\bu \in (\cY)^n, \bm{\theta} \in \bbR_{+}^{p}} q(\bu,\btheta) = \mathop{\arg\min}\limits_{\bu \in (\cY)^n, \bm{\theta} \in \bbR_{+}^{p}}& \sum_{i = 1}^{n} ||y_i - \sum_{l = 1}^{p}\sum_{j = 1}^{n}\theta_l K_l(x_j^{(l)},x_i^{(l)})u_j ||_{\cY}^2 +\\& \lambda_1\sum_{l = 1}^{p} \theta_l \sum_{i,j = 1}^{n} \langle K_l(x_i^{(l)},x_j^{(l)})u_i, u_j \rangle_{\cY} + \lambda_2 \sum_{l = 1}^{p} \theta_l,
    \end{aligned}
\end{equation}
where $\lambda_2 \in \bbR_{+}$ is the lasso regularization parameter.
If $\hat{\theta}_l = 0$, it implies that $x_i^{(l)}, i = 1,\cdots,n$ does not contribute to the estimation of $y_i,i = 1,\cdots,n$ and can be considered as an irrelevant covariate function.

\section{COMPUTATION}
\label{sec:computation}
We adopt a BCD method to solve Eq.(\ref{eq:model 3}). It iteratively solves $\bu$ with fixed $\bm{\theta}$ and solve $\bm{\theta}$ with fixed $\bu$, until convergence as shown in Algorithm \ref{alg:final BCD}. The algorithmic details for solving $\bu$ with fixed $\btheta$ are presented in Section 3.1, while the details for solving $\btheta$ with fixed $\bu$ are presented in Section 3.2.

\begin{algorithm}[h]
	\caption{Block coordinate descent for Eq. (\ref{eq:model 3})} 
	\label{alg:final BCD}
	\begin{algorithmic}
		\REQUIRE Covaraite functions $x_i^{(l)}, i=1,\cdots,n,l = 1,\cdots,p$\\
		Response functions $y_i, i = 1,\cdots,n$\\
		Operator-valued kernel $K_l = g_lT,l = 1,\cdots,p $\\
		Initial value $(\bu_0,\btheta_0)$\\
		Regularization parameter $\lambda_1,\lambda_2$,\ Tolerance $\epsilon$\\
		
		\ENSURE  $(\bu_k,\btheta_k)$
		
        \STATE Set $k = 0$
		\WHILE{not converge} 
			\STATE $k = k + 1$
			\STATE Get $\bu_k = \mathop{\arg\min}\limits_{\bu\in\cY^{n}} m_{\btheta_{k-1}}(\bu)$ by solving Eq. (\ref{eq:solve u}) with Algorithm \ref{alg:solve u}.
			\STATE Get $\btheta_k = \mathop{\arg\min}\limits_{\btheta\in\bbR^{p}_{+}} h_{\bu_{k-1}}(\btheta)$ by solving Eq. (\ref{eq:solve theta}) with Algorithm \ref{alg:solve theta}.
			\IF{$|q(\bu_{k},\btheta_{k}) - q(\bu_{k-1},\btheta_{k-1})|\leq \epsilon |q(\bu_{k-1},\btheta_{k-1})|$}
                \STATE break
            \ENDIF
		\ENDWHILE
	\end{algorithmic}
\end{algorithm}

\subsection{Solving $\bu$ with fixed $\btheta$}
\label{sec:solve_u}
Fixing $\bm{\theta}$, we can rewrite Eq. (\ref{eq:model 3}) as following
\begin{equation}
    \label{eq:solve u}
    \begin{aligned}
    \Tilde{\bu} &= \mathop{\arg\min}\limits_{\bu \in (\cY)^n} m_{\btheta}(\bu)\\ 
    &= \mathop{\arg\min}\limits_{\bu \in (\cY)^n} \sum_{i = 1}^{n} ||y_i - \sum_{j = 1}^nK(x_j,x_i)u_j ||_{\cY}^2 + \lambda_1 \sum_{i,j = 1}^{n}\langle K(x_i,x_j)u_i,u_j \rangle_{\cY},
    \end{aligned}
\end{equation}
where we define $K(x_j,x_i) = \sum_{l = 1}^{p} \theta_{l}K_l(x_j^{(l)},x_i^{(l)})$ associated with $\btheta$. As $\bm{\theta} \geq 0$, $K(\cdot,\cdot): \cX \times \cX \rightarrow\cY$ is also nonnegative. 
In this way, Eq. (\ref{eq:solve u}) is the same as solving a standard functional RKHS regression model \citep{kadri2016operator}. Its solution $\Tilde{\bu}$ satisfies the following equation,
\begin{equation}
\label{eq:big K}
    (\mathbf{K}_{\btheta} + \lambda_1I)\Tilde{\bu} = \mathbf{y},
\end{equation}
where $\mathbf{y} = (y_1,\cdots, y_n)^T\in \cY^n$ and $\mathbf{K}_{\btheta} \in \cL(\cY)^{n \times n}$ is a $n\times n $ block operator kernel matrix, with its $ij$-th entry $K(x_i,x_j) \in \cL(\cY)$.

According to Remark \ref{remark1}, we consider the kernel function with the following detachable formulation,
\begin{equation*}
    K(x_i,x_j) = g(x_i,x_j)T, \quad \forall x_i, x_j\in \cX,
\end{equation*}
where $g(x_i,x_j) = \sum_{l = 1}^p \theta_l g_l(x_i^{(l)},x_j^{(l)})$. Then we have $\mathbf{K}_{\btheta} = \mathbf{G}_{\btheta} \otimes T$, where $\mathbf{G}_{\btheta}  \in \bbR^{n \times n}$ with its $ij$-th entry $g(x_i,x_j)$. The subscript of $\mathbf{K}_{\btheta}$ and $\mathbf{G}_{\btheta}$ means that they are related to $\btheta$. 
According to \cite{kadri2016operator}, Eq. (\ref{eq:solve u}) has closed-form solution as $\Tilde{\bu} =(\mathbf{K}_{\btheta^{*}} + \lambda_1I)^{-1}\mathbf{y}$. 
Thus, the key issue becomes how to compute $(\mathbf{K}_{\btheta} + \lambda_1I)^{-1}\mathbf{y}$. According to Theorem 6 in \cite{kadri2016operator}, we have that for a compact, normal operator, $T$, then there exists an orthonormal basis of eigenfunctions $\{\phi_i,\  i\geq 1 \}$ corresponding to eigenvalues $\{ \sigma_i, \ i \geq 1\}$ such that $Ty = \sum_{i=1}^{\infty}\sigma_i \langle y,\phi_i \rangle_\cY \phi_i$  holds for $\forall y \in \cY$. Then we will introduce how to get the eigenvalues and eigenfunctions of $\mathbf{K}_{\btheta}$.

Note that $\mathbf{K}_{\btheta} = \mathbf{G}_{\btheta} \otimes T$, then we will compute the eigenvalues and eigenvectors of $\mathbf{G}_{\btheta}$ and $T$ separately. For $\mathbf{G}_{\btheta} \in \bbR^{n\times n}$, we can get the eigenvalues $\beta_i \in \bbR, i = 1,\cdots,n$, and the eigenvectors $\mathbf{v}_i\in\bbR^{n},i = 1,\cdots,n$. According to Assumption \ref{ass:inverse operator}, we have $T$ as a finite rank operator, and $\kappa$ as its rank. For $T$, we get the eigenvalues $\delta_i \in \bbR, i = 1,\cdots,\kappa$, and the eigenfunctions $w_i\in \cY,i = 1,\cdots,\kappa$.  Then we can get $\mathbf{K}_{\btheta}$'s  eigenvalue  as $\eta_i \in \bbR, i =1,\cdots,n\times \kappa$ where $\bm{\eta} = \bm{\beta} \otimes \bm{\delta}$ and  eigenfunction as  $\mathbf{z}_i \in \cY^n, i = 1,\cdots, n\times \kappa$ where $\mathbf{z} = \mathbf{v} \otimes \mathbf{w}$.

Therefore, according to the above eigenvalue decomposition, we can calculate the closed-form solution of Eq. (\ref{eq:solve u}), and see Algorithm \ref{alg:solve u} for details.

\begin{algorithm}
    \caption{Solving Eq. (\ref{eq:solve u}) with given fixed $\bm{\theta}$} 
    \label{alg:solve u}
\begin{algorithmic}
    \REQUIRE $x_i^{(l)}, i=1,\cdots,n,l = 1,\cdots,p$\\
    $y_i, i = 1,\cdots,n$\\
    $\btheta,g,T,\kappa,\lambda_1$\\
    \ENSURE the solution of Eq. (\ref{eq:solve u}), $\Tilde{\bu}$ \\
    \STATE \textbf{Step 1} \\
\qquad Eigendecomposition $\mathbf{G}_{\btheta^{*}}$. $\beta_p \in \bbR, p = 1,\cdots,n$ is the $p$th eigenvalue\\ \qquad  and  $\mathbf{v}_p\in\bbR^{n},p = 1,\cdots,n$ is the $p$th eigenvector.
    \STATE \textbf{Step 2} \\
\qquad Eigendecomposition $T$. $\delta_q \in \bbR, q = 1,\cdots,\kappa$ is the $q$th eigenvalue\\ \qquad  and  $w_q\in \cY,i = 1,\cdots,\kappa$ is the $q$th eigenfunction.\\
    \STATE \textbf{Step 3} \\
\qquad $\Tilde{\bu} = \sum_{p=1}^n\sum_{q=1}^\kappa[\frac{1}{\beta_p\delta_q + \lambda_1}\sum_{j=1}^n\langle v_{pj}w_q,y_j\rangle_{\cY}\mathbf{v}_pw_q]$
\\ \qquad where $v_{pj}$ denotes the $j$th element of $\mathbf{v}_p$.
\end{algorithmic}
\end{algorithm}

\subsection{Solving $\btheta$ with fixed $\bu$}
\label{sec:solve_theta}
Fixing $\bu$, we can rewrite Eq. (\ref{eq:model 3}) as following,
\begin{equation}
\label{eq:solve theta}
    \begin{aligned}
    \Tilde{\bm{\theta}} = \mathop{\arg\min}\limits_{\bm{\theta}\in\bbR^{p}_{+}} h_{\bu}(\bm{\theta}) =\mathop{\arg\min}\limits_{\bm{\theta}\in\bbR^{p}_{+}} &\sum_{i = 1}^{n} ||y_i - \sum_{l = 1}^{p}\theta_l\sum_{j = 1}^{n} K_l(x_j^{(l)},x_i^{(l)})u_j ||_{\cY}^2 +\\& \lambda_1\sum_{l = 1}^{p} \theta_l \sum_{i,j = 1}^{n} \langle K_l(x_i^{(l)},x_j^{(l)})u_i, u_j \rangle_{\cY} + \lambda_2 \sum_{l = 1}^{p} \theta_l.
    \end{aligned}
\end{equation}
It is easy to calculate that 
\begin{equation}
\label{eq:gradient of theta}
    \nabla h_{\bu}(\btheta) = \Tilde{\mathbf{D}}_{\bu} + 2\Tilde{\mathbf{K}}_{\bu}\btheta + \lambda_2\mathbf{1}.
\end{equation}
Here $\Tilde{\mathbf{D}}_{\bu} \in \bbR^{p}$ with its $l$-th entry $\sum_{i=1}^n\langle \lambda_1 u_i-2y_i, \sum_{j = 1}^n K_l(x_j^{(l)},x_i^{(l)})u_j \rangle_{\cY}$ and $\Tilde{\mathbf{K}}_{\bu} \in \bbR^{p\times p}$ with its $lh$-th entry $\sum_{i = 1}^n \langle\sum_{j = 1}^n K_l(x_j^{(l)},x_i^{(l)})u_j, \sum_{j = 1}^n K_h(x_j^{(h)},x_i^{(h)})u_j \rangle_{\cY}$ and $\mathbf{1} \in \bbR^{p}$ with all its elements equal to 1. The subscript of $\Tilde{\mathbf{D}}_{\bu}$ and $\Tilde{\mathbf{K}}_{\bu}$ means that they are related to $\bu$. We can find that the Hession matrix of $h_{\bu}(\btheta)$ is $2\Tilde{\mathbf{K}}_{\bu}$ and it is positive definite, which is shown with details in Appendix 6. Thus, Eq. (\ref{eq:solve theta}) is convex. We can use any gradient-based optimization method for nonnegative constraints  to find the solution of Eq. (\ref{eq:solve theta}). 

Then we present how to use the conjugate gradient method for nonnegative constraints \citep{li2013conjugate} to solve Eq. (\ref{eq:solve theta}). 
We use $ \btheta_k $ to denote the value of theta at the $k$th iteration in the calculation, with $\theta_{k,l}$ representing its $l$th element. Correspondingly, $I_k = \{l|\theta_{k,l} = 0 \}$ and $J_k = \{1,\cdots,p \} \backslash I_k$ denote the dimensions of the boundary points and interior points at the $k$th iteration.

In the algorithm considered subsequently, the most crucial aspect is computing the iterative direction of the conjugate gradient method, i.e. $\mathbf{d}_k$. For the boundary points, i.e. $l \in I_k$, we have 
\begin{equation}
\label{eq:alg 2 1}
    \begin{aligned}
        &d_{k,l}= 0 \qquad \qquad \quad \text{if}\  l \in I_k \ \text{and}\  \nabla_l h_{\bu}(\btheta_{k}) > 0, \\ 
        &d_{k,l} = -\nabla_l h_{\bu}(\btheta_{k})\ \ \ \  \text{if}\  l \in I_k \ \text{and}\  \nabla_l h_{\bu}(\btheta_{k}) \leq 0,
    \end{aligned}
\end{equation}
where $d_{k,l}$ is the $l$th element of $\mathbf{d}_k$, $\nabla h_{\bu}(\btheta_{k})$ is calculated by Eq. (\ref{eq:gradient of theta}) with fixed $\bu$ and $\nabla_l h_{\bu}(\btheta_{k})$ is the $l$th elements of $\nabla h_{\bu}(\btheta_{k})$.

For the interior points, i.e. $l \in J_k$, we can use the modified Polak-Ribiere-Polyak method mentioned in \cite{li2013conjugate} to get follows,
\begin{equation}
\label{eq:alg 2 2}
    \mathbf{d}_{k,J_k} = -\nabla_{J_k}h_{\bu}(\btheta_{k}) +\mathring{\beta}_k \mathbf{d}_{k-1,J_k}-\mathring{\theta}_k \bm{\gamma}_{k-1}, 
\end{equation}
where  $\mathring{\beta}^k = \nabla_{J_k}h_{\bu}(\btheta_{k})^T \bm{\gamma}_{k-1}/||\nabla h_{\bu}(\btheta_{k-1}) ||^2 $, $\mathring{\theta}_k =\nabla_{J_k}h_{\bu}(\btheta_{k})^T\mathbf{d}_{k-1,J_k}/||\nabla h_{\bu}(\btheta_{k-1}) ||^2 $ and $\bm{\gamma}_{k-1} = \nabla_{J_k}h_{\bu}(\btheta_{k}) - \nabla_{J_k}h_{\bu}(\btheta_{k-1})$.

After we have discussed the iteration direction $\mathbf{d}_k$, we need to consider the iteration step size $\alpha_k$. In this problem, the choice of $\alpha_k$ needs to take into account nonnegative constraints, as detailed in Algorithm \ref{alg:solve theta}.

\begin{algorithm}[h]
    \caption{Solving Eq. (\ref{eq:solve theta}) with given fixed $\bu$}
    \label{alg:solve theta}
    \begin{algorithmic}
        \REQUIRE $x_i^{(l)}, i=1,\cdots,n,l = 1,\cdots,p$\\
        $y_i, i = 1,\cdots,n$\\
        $K_l,l = 1,\cdots,p $\\
        $\bu,\lambda_1,\lambda_2,\btheta_0,\epsilon$\\
 Set $k = 0$\\
        \ENSURE  the solustion of Eq. (\ref{eq:solve theta}), $\Tilde{\btheta}$ \\
        \STATE Set $k = 0$
        \WHILE{not converge}
        \STATE $k = k + 1$
        \STATE Calculate $\nabla h_{\bu}(\btheta_{k-1})$ by Eq. (\ref{eq:gradient of theta}) 
        \STATE Compute $\mathbf{d}_k = (\mathbf{d}_{I_k},\mathbf{d}_{J_k} )$ by Eq. (\ref{eq:alg 2 1}) and (\ref{eq:alg 2 2})\\
        \STATE Determine $\alpha_k = max\{\rho^j|,j = 0,1,2,\cdots \}$ which satisfies $\btheta_k + \alpha_k\mathbf{d}_k \geq 0 $\\ \qquad\qquad and $h_{\bu}(\btheta_k + \alpha_k\mathbf{d}_k) \leq h_{\bu}(\btheta_k) - \alpha_k^2||\mathbf{d}_k||^2$.\\
        \STATE  $\btheta_k = \btheta_{k-1} + \alpha_k\mathbf{d}_k$\\
        \IF{ $|h_{\bu}(\btheta_{k}) - h_{\bu}(\btheta_{k-1})|\leq \epsilon |h_{\bu}(\btheta_{k-1})|$}
        \STATE break
        \ENDIF
        \ENDWHILE
        \STATE $\Tilde{\btheta} =\btheta_k$ 
    \end{algorithmic}
\end{algorithm}

Then we use the BCD method to iteratively estimate $\btheta$ and $\bu$ until convergence as shown in Algorithm \ref{alg:final BCD}.

\section{THEORY}
\label{sec:theory}
In this section, we first present some assumptions regarding the true parameter, $(\bu^{*},\btheta^{*})$. Then, we discuss the block-wise convergence of $\bu$ and $\btheta$ in Section \ref{sec:solve_u} and Section \ref{sec:solve_theta} separately. In particular, the convergence of $\bu$ is examined with fixed $\btheta = \btheta^{*}$, and similarly, the convergence of $\btheta$ is examined with fixed $\bu = \bu^{*}$. Since our focus is on the variable selection regarding the convergence of $\btheta$, we are particularly interested in the convergence of its signs, i.e., whether $\theta_l$ of irrelevant covariate functions correctly converges to 0. After discussing the block-wise convergence, we further investigate the convergence of the signs of $\btheta$ when $\bu$ is not fixed. Finally, for Algorithm \ref{alg:final BCD}, we provide the conditions under which it converges to the global optimum.

\begin{assumption}
    \label{ass:true value}
    There exists a unique true parameter set $S^{*} = \{(\bu,\btheta) |\bu \in (\cY)^n, \btheta \in \bbR^{p}_{+} \}$ which contains all parameters $(\bu,\btheta)$ that satisfies
    \begin{equation}
    \label{eq:true value 1}
              \bbE(y_i) = \sum_{l=1}^p\sum_{j=1}^n\theta_lK_l(x_j^{(l)},x_i^{(l)})u_j\  \text{for}\ i = 1,\cdots,n.
    \end{equation}
Here $K_l(\cdot,\cdot) = g_l(\cdot,\cdot) T,\ l = 1,\cdots, p$ are known kernel functions that have separable construction as shown in Eq. (\ref{eq:separable kernel}).
    The $S^{*}$ also satisfies
    \begin{equation}
    \label{eq:true value 2}
         \forall (\bu^{*},\btheta^{*}), (\bu_{*},\btheta_{*}) \in S^{*}, \exists c >0 \quad \text{s.t.} \quad  \bu^{*} = c \bu_{*}, \  \btheta^{*} = \btheta_{*}/c,
    \end{equation}
where $c$ is a scalar constant. 
\end{assumption}

Assumption \ref{ass:true value} provides a set $S^{*}$ containing the true parameter. We can consider it as an equivalence class regarding scale, where all the parameters in this set are equivalent from the perspective of variable selection and model fitting. For example, if covariate function $l$ is not involved in the regression, then the $l$-th element of $\btheta^{*}$ equals 0. Consequently, the $l$-th element of all the $\btheta_{*}$ equals 0. 
The same reasoning applies if the $l$-th element of $\btheta^{*}$ does not equal 0. As to model fitting, regarding Eq. (\ref{eq:true value 1}), it is the product of $\btheta$ and $\bu$ that influences the fitting results. Hence, all the parameters in the set $S^{*}$ lead to an equivalent fitting result. Therefore, making such an assumption is reasonable. For convenience in subsequent discussions, we arbitrarily select a parameter belonging to $ S^{*}$ as the true parameter $(\bu^{*},\btheta^{*})$.

\begin{assumption}
    \label{ass:lambda rate}
    $\lambda_1 = \Omega(n^{\frac{c_1-1}{2}})$, $\lambda_2 = \Omega(n^{\frac{1+c_2}{2}})$. $c_1$ and $c_2$ are constants satisfying $0 \leq c_1 <c_2 <1$. Here $\Omega(\cdot)$ means infinitesimal of the same order, i.e. $a = \Omega(n)$ means that $\exists$ $m, M >0$ s.t. $m < |\frac{a}{n}| < M$ as $n\to \infty$.
\end{assumption}

In Assumption \ref{ass:lambda rate}, the convergence rate of $\lambda_1$ is set according to the magnitude of $\newline\sum_{l = 1}^{p} \theta_l \sum_{i,j = 1}^{n} \langle K_l(x_i^{(l)},x_j^{(l)})u_i, u_j \rangle_{\cY}$, which is $O(n)$. To ensure a balance in the order of each term in Eq. (\ref{eq:model 3}), it is reasonable to assume its coefficient $\lambda_1$ is $o(1) = \Omega(n^{\frac{c_1-1}{2}})$. The setting of $\lambda_2$ follows the common convergence rate in the literature of regression with lasso penalty \citep{zhao2006model}.
\begin{assumption}
\label{ass:inverse operator}
    We assume the operator $T$ has a finite rank and there exist two constants $ b_{min}$ and $b_{max} > 0$ which satisfies,
    \begin{equation}
        b_{min} \leq \sigma_{min}(\mathbf{G}) \leq \sigma_{max}(\mathbf{G}) \leq b_{max}.
    \end{equation}
    Here $\sigma_{min}(\mathbf{G})$ and $\sigma_{max}(\mathbf{G})$ denote the minimum and maximum eigenvalues of $\mathbf{G}$, respectively.
\end{assumption}

As $n \to \infty$, $\mathbf{G}\in \bbR^{\infty \times \infty}$. Assumption \ref{ass:inverse operator} can help us calculate the pseudoinverse of $\mathbf{G}$. A similar situation has also been discussed in \cite{xue2023optimal}.

\begin{theorem}
    \label{thm:u}
If fixing $\btheta = \btheta^{*}$, we optimize Eq. (\ref{eq:solve u}) and get the solution $\Tilde{\bu}$. Under Assumptions \ref{ass:true value}-\ref{ass:inverse operator}, if  $\bbE[||y_i||_{\cY}^2] < \infty$ holds for $i=1,\cdots,n$, we have,
    \begin{equation}
        \bbE\left[||\frac{1}{n}\sum_{i=1}^n \Tilde{u}_i - \frac{1}{n}\sum_{i=1}^n u_i^{*}||_{\cY}^2 \right] \to 0,\ \text{as}\ n \to \infty,
    \end{equation}
where $\Tilde{u}_i$ is the $i$-th element of $\Tilde{\bu}$, and $u_i^{*}$ is the $i$-th element of $\bu^{*}$.
\end{theorem}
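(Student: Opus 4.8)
The plan is to use the closed-form solution $\Tilde{\bu} = (\mathbf{K}_{\btheta^{*}} + \lambda_1 I)^{-1}\mathbf{y}$ from Section \ref{sec:solve_u}, subtract the true $\bu^{*}$ (which satisfies $\mathbf{K}_{\btheta^{*}}\bu^{*} = \bbE(\mathbf{y})$ by Eq. (\ref{eq:true value 1}) with $\btheta=\btheta^{*}$), and split the error into a stochastic part and a deterministic regularization bias. Writing $\mathbf{y} = \bbE(\mathbf{y}) + \bm{\epsilon}$, we get
\begin{equation}
\label{eq:err-decomp}
\Tilde{\bu} - \bu^{*} = (\mathbf{K}_{\btheta^{*}} + \lambda_1 I)^{-1}\bm{\epsilon} - \lambda_1 (\mathbf{K}_{\btheta^{*}} + \lambda_1 I)^{-1}\bu^{*},
\end{equation}
using $(\mathbf{K}+\lambda_1 I)^{-1}\mathbf{K}\bu^{*} = \bu^{*} - \lambda_1(\mathbf{K}+\lambda_1 I)^{-1}\bu^{*}$ on the part involving $\bbE(\mathbf{y})$. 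Since the quantity of interest is the averaged error $\frac1n\sum_i(\Tilde{u}_i - u_i^{*})$, I would apply the averaging functional (equivalently, left-multiply by $\mathbf{1}_n^\top / n$) to \eqref{eq:err-decomp} and bound the $\cY$-norm of each of the two resulting terms in expectation.

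For the bias term, I would exploit the separable structure $\mathbf{K}_{\btheta^{*}} = \mathbf{G}_{\btheta^{*}} \otimes T$ and the eigendecomposition from Algorithm \ref{alg:solve u}: the eigenvalues of $\mathbf{K}_{\btheta^{*}}$ are $\beta_p \delta_q$. Assumption \ref{ass:inverse operator} bounds the nonzero eigenvalues of $\mathbf{G}_{\btheta^{*}}$ below by $b_{min}>0$ and $T$ has finite rank $\kappa$ with nonzero eigenvalues $\delta_q$, so on the range of $\mathbf{K}_{\btheta^{*}}$ the operator $\lambda_1(\mathbf{K}_{\btheta^{*}}+\lambda_1 I)^{-1}$ has norm at most $\lambda_1/(b_{min}\min_q|\delta_q| + \lambda_1) \to 0$ since $\lambda_1 = \Omega(n^{(c_1-1)/2}) = o(1)$ by Assumption \ref{ass:lambda rate}. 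One must check that $\bu^{*}$ lies in (or projects negligibly onto the complement of) the range of $\mathbf{K}_{\btheta^{*}}$; this follows because $\bu^{*}$ is itself expressible through the representer form, or can be replaced by its in-range representative using the equivalence class in Assumption \ref{ass:true value}. For the stochastic term, $\bbE\|\frac1n\mathbf{1}_n^\top(\mathbf{K}_{\btheta^{*}}+\lambda_1 I)^{-1}\bm{\epsilon}\|_{\cY}^2$ expands, using $\bbE(\epsilon_i\epsilon_j) = \sigma^2\mathbbm{1}_{i=j}$ (as functions on $\Omega_y$), into $\frac{\sigma^2}{n^2}\,|\Omega_y|\cdot\mathbf{1}_n^\top (\mathbf{K}_{\btheta^{*}}+\lambda_1 I)^{-2}\mathbf{1}_n$ type expression; bounding the middle operator's norm by $(b_{min}\min_q|\delta_q| + \lambda_1)^{-2} = O(1)$ (or, if $\bm{\epsilon}$ has a component in the kernel of $\mathbf{K}_{\btheta^{*}}$, by $\lambda_1^{-2}$, handled by the $\lambda_1$ rate) and $\|\mathbf{1}_n\|^2 = n$ gives a bound of order $1/n \to 0$, possibly with a $\lambda_1^{-2}/n = O(n^{1-c_1}/n) = O(n^{-c_1})$ correction which still vanishes when $c_1 > 0$; the boundary case $c_1 = 0$ needs the finite-rank structure of $T$ so the kernel-of-$\mathbf{K}$ contribution to $\bm\epsilon$ is controlled, or one argues the averaging $\mathbf{1}_n^\top$ annihilates it.

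Combining the two bounds via $\|a+b\|^2 \le 2\|a\|^2 + 2\|b\|^2$ and taking $n\to\infty$ yields the claim. The main obstacle I anticipate is the interplay between the infinite-dimensionality of $\mathbf{G}_{\btheta^{*}}$ as $n\to\infty$ and its pseudoinverse: Assumption \ref{ass:inverse operator} gives uniform eigenvalue control, but one must be careful that $(\mathbf{K}_{\btheta^{*}}+\lambda_1 I)^{-1}$ is interpreted consistently (as in Algorithm \ref{alg:solve u}, acting via the eigenbasis) and that the averaged functional $\frac1n\mathbf{1}_n^\top$ interacts well with the $n$-dependent eigenbasis $\mathbf{v}_p$ — in particular verifying that the quadratic form $\frac1{n^2}\mathbf{1}_n^\top \mathbf{G}_{\btheta^{*}}^{-2}\mathbf{1}_n$ (and its $\lambda_1$-regularized analogue) is $O(1/n)$ rather than $O(1)$, which is exactly where Assumption \ref{ass:inverse operator}'s lower bound $b_{min}$ is essential. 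A secondary technical point is justifying the exchange of the infinite eigen-sum for $T$ with expectation, which is routine given $\bbE\|y_i\|_{\cY}^2 < \infty$ and finite rank of $T$.
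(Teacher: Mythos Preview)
Your decomposition \eqref{eq:err-decomp} is correct and leads to a valid proof, but it differs from the paper's route. The paper inserts an intermediate quantity $\mathring{\bu}=\mathbf{K}_{\btheta^{*}}^{\dagger}\mathbf{y}$ (the unregularized pseudoinverse solution) and bounds $\|\frac1n\sum(\Tilde u_i-\mathring u_i)\|^2$ and $\|\frac1n\sum(\mathring u_i-u_i^{*})\|^2$ separately, working entirely in the explicit eigenbasis of Algorithm~\ref{alg:solve u}. Your split is the standard bias--variance one: a deterministic regularization bias $\lambda_1(\mathbf{K}+\lambda_1 I)^{-1}\bu^{*}$ and a stochastic term $(\mathbf{K}+\lambda_1 I)^{-1}\bm\epsilon$. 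The two decompositions differ by the cross piece $[(\mathbf{K}+\lambda_1 I)^{-1}-\mathbf{K}^{\dagger}]\bm\epsilon$, which each approach absorbs into a different term; both are legitimate.

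What the paper's organization buys is that the two range/kernel issues you flag never arise explicitly. By taking $\bu^{*}=\mathbf{K}_{\btheta^{*}}^{\dagger}\bbE(\mathbf{y})$ the true parameter is automatically in the range, and by writing $\Tilde{\bu}$ and $\mathring{\bu}$ via the finite eigen-sum $\sum_{p=1}^{n}\sum_{q=1}^{\kappa}$ of Algorithm~\ref{alg:solve u}, any noise component orthogonal to $\{w_q\}_{q\le\kappa}$ is simply absent from the computation---there is no $\lambda_1^{-2}$ contribution at all. Your proposal eventually arrives at the same resolution (``the finite-rank structure of $T$''), so there is no real gap, only some unnecessary hedging. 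One small correction: the equivalence class in Assumption~\ref{ass:true value} is only the scaling $\bu\mapsto c\bu$, $\btheta\mapsto\btheta/c$, which does not let you move $\bu^{*}$ into the range of $\mathbf{K}_{\btheta^{*}}$; the correct justification is the pseudoinverse choice $\bu^{*}=\mathbf{K}_{\btheta^{*}}^{\dagger}\bbE(\mathbf{y})$, exactly as the paper does.
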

 \begin{proof}
     The proof of Theorem \ref{thm:u} is shown in Appendix 3.
 \end{proof}
 
Theorem \ref{thm:u} states that when the true parameter $\btheta^{*}$ is known, the solution of Eq. (\ref{eq:solve u}) to obtain $\Tilde{\bu}$ will converge, in expectation, to the true parameter $\bu^{*}$.

To infer the convergence property of $\btheta$, we introduce a function, $sign(\cdot)$, which maps positive entry to one and zero to zero. Without loss of generality, we assume that the first $r$ elements of $\btheta^{*} = (\theta_1^{*},\cdots,\theta_p^{*})^T$ are greater than 0, and the remaining $p-r$ elements equals $0$. This implies that, in the regression process, the last $p-r$ covariate functions are irrelevant to $y$ and should be discarded during the variable selection process. In this way, we get $sign(\btheta^{*}) = [\underbrace{1,\cdots,1}_{r},\underbrace{0,\cdots,0}_{p-r}]$. 

\begin{assumption}
\label{ass:irrepresentable}
For any $\bu \in (\cY)^n$ and $\bu \neq \mathbf{0}$, define its corresponding $\bC_{\bu}^n = \frac{\Tilde{\mathbf{K}}_{\bu}}{n}$ where $\bC_{\bu}^n$ is related to $\bu$. We partition it into blocks to obtain the matrix$ 
\begin{pmatrix}
    \mathbf{C}^n_{\bu,11} & \mathbf{C}^n_{\bu,12}\\
    \mathbf{C}^n_{\bu,21}& \mathbf{C}^n_{\bu,22}
\end{pmatrix}$
where $\mathbf{C}^n_{\bu,11} \in \bbR^{r\times r}$ and $\mathbf{C}^n_{\bu,21} \in \bbR^{(p-r)\times r}$. Then there exists a positive constant vector $\bm{\xi}_{\bu}$ such that,
    \begin{equation}
    \label{eq:ass irre}
        |\mathbf{C}^n_{\bu,21}(\mathbf{C}^n_{\bu,11})^{-1}sign(\btheta_{(1)}^{*}) | \leq \mathbf{1} - \bm{\xi}_{\bu},
    \end{equation}
where $sign(\btheta_{(1)}^{*})$ denotes the first $r$ elements of $sign(\btheta^{*})$.
\end{assumption}

This irrepresentable condition is generally used in the literature of regression with lasso penalty \citep{zhao2006model}, and Assumption \ref{ass:irrepresentable} is its kernel function version. $\mathbf{C}^n_{\bu,21}(\mathbf{C}^n_{\bu,11})^{-1}$ can be approximately interpreted as the regression coefficients obtained by using kernel methods to regress irrelevant covariate functions on relevant covariate functions. Eq. (\ref{eq:ass irre}) requires that the total contribution of an irrelevant covariate function represented by the relevant covariate functions in the true model does not reach 1. This is a reasonable assumption because if the correlations between the relevant and irrelevant covariate functions are too high, it becomes difficult to perform variable selection. Detailed explanations regarding this irrepresentable condition can be found in \cite{zhao2006model}.

\begin{assumption}
    \label{ass:regular condition}
    \begin{align}
    \label{eq:ass re 1}
        \bC^n_{\bu} \rightarrow \bC_{\bu} \quad \text{as} \quad n \rightarrow \infty, \quad \forall \bu \in (\cY)^n,\\
    \label{eq:ass re 2}
        \frac{1}{n} \mathop{\max}\limits_{1\leq i \leq n} \sum_{l=1}^p||\sum_{j=1}^nK_l(x_j^{(l)},x_i^{(l)})u_j^{*}||^2_{\cY} \rightarrow 0 \quad \text{as} \quad n \rightarrow \infty,
    \end{align}
    where $\bC_{\bu}$ is positive definite. 
\end{assumption}

Since $\bC^n_{\bu} = \frac{\Tilde{\mathbf{K}}_{\bu}}{n}$ is positive definite, it is reasonable to assume its limitation $\bC_{\bu}$ is also positive definite.
This ensures the invertibility of $\bC_{\bu}$, and its inverse matrix can serve as the covariance matrix of some limiting distribution, used in the proof process. Because $||E(y_i)||_{\cY}^2 = ||\sum_{l=1}^p \sum_{j=1}^n\theta_l^{*}K_l(x_j^{(l)},x_i^{(l)})u_j^{*}||_{\cY}^2 \leq p\theta_{max}^2\sum_{l=1}^p||\sum_{j=1}^nK_l(x_j^{(l)},x_i^{(l)})u_j^{*}||^2_{\cY}$, where $\theta_{max} =  \max_{1\leq l\leq p}\theta_l^{*}$, Eq. (\ref{eq:ass re 2}) ensures that as the sample size $n$ increases, there will not be an extremely large value in $\{||E(y_i)||^2_{\cY}, 1\leq i \leq n\}$. This is a weaker condition compared to $||E(y_i) ||_{\cY}^2 < \infty$,

\begin{theorem}
\label{thm:theta}
    Fixing $\bu = \bu^{*}$, under Assumptions \ref{ass:true value}, \ref{ass:lambda rate}, \ref{ass:irrepresentable} and \ref{ass:regular condition}, we have 
    \begin{equation}
        P(sign(\Tilde{\btheta}) = sign(\btheta^{*})) = 1 - o(e^{-n^{c_2}}) \to 1, as\  n \to \infty,
    \end{equation}
    where $\Tilde{\btheta}$ is the solution of Eq. (\ref{eq:solve theta}) and $c_2$ is the constant shown in Assumption \ref{ass:lambda rate}.
\end{theorem}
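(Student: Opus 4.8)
The plan is to establish sign consistency by a Karush--Kuhn--Tucker (primal--dual) witness argument in the spirit of \cite{zhao2006model}, adapted to the nonnegativity constraint on $\btheta$. Abbreviate $\psi_i^{(l)} := \sum_{j=1}^n K_l(x_j^{(l)},x_i^{(l)})u_j^{*}\in\cY$, so that by Assumption~\ref{ass:true value} the data satisfy $y_i = \sum_{l=1}^p\theta_l^{*}\psi_i^{(l)}+\epsilon_i$. With $\bu=\bu^{*}$ fixed, $h_{\bu^{*}}$ is a quadratic in $\btheta$ plus the linear term $\lambda_2\mathbf 1^{T}\btheta$, minimized over $\btheta\ge\mathbf 0$; its Hessian $2\tilde{\mathbf K}_{\bu^{*}}$ is positive definite (Appendix~6), so the minimizer $\tilde\btheta$ is unique and is characterized by the first-order conditions $\nabla_l h_{\bu^{*}}(\tilde\btheta)=0$ whenever $\tilde\theta_l>0$ and $\nabla_l h_{\bu^{*}}(\tilde\btheta)\ge 0$ whenever $\tilde\theta_l=0$, with $\nabla h_{\bu^{*}}$ given by Eq.~(\ref{eq:gradient of theta}). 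Hence it suffices to produce one vector satisfying these conditions with sign pattern $sign(\btheta^{*})$; by uniqueness it must be $\tilde\btheta$.

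First I would build the witness $\hat\btheta=(\hat\btheta_{(1)},\mathbf 0)$ by setting $\hat\theta_l=0$ for $l>r$ and solving $\nabla_{(1)}h_{\bu^{*}}(\hat\btheta)=\mathbf 0$ on the first $r$ coordinates. Substituting $y_i=\sum_l\theta_l^{*}\psi_i^{(l)}+\epsilon_i$ into $\tilde{\mathbf D}_{\bu^{*}}$ and using $\btheta^{*}_{(2)}=\mathbf 0$ yields the closed form $\hat\btheta_{(1)}-\btheta^{*}_{(1)}=(\bC^n_{\bu^{*},11})^{-1}\bigl[\tfrac1n\mathbf W_{(1)}-\tfrac{\lambda_1}{2n}\mathbf a_{(1)}-\tfrac{\lambda_2}{2n}\mathbf 1_r\bigr]$, where $W_l:=\sum_{i=1}^n\langle\epsilon_i,\psi_i^{(l)}\rangle_{\cY}$ and $a_l:=\sum_{i=1}^n\langle u_i^{*},\psi_i^{(l)}\rangle_{\cY}$. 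The two events to control are $E_1=\{\hat\btheta_{(1)}>\mathbf 0\}$ and $E_2=\{\nabla_l h_{\bu^{*}}(\hat\btheta)\ge 0\text{ for all }l>r\}$; on $E_1\cap E_2$ the witness is feasible for the KKT system and has the right sign pattern, so $\tilde\btheta=\hat\btheta$ and $sign(\tilde\btheta)=sign(\btheta^{*})$.

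For $E_1$, by Assumption~\ref{ass:regular condition} (together with~\ref{ass:inverse operator}) $(\bC^n_{\bu^{*},11})^{-1}$ is bounded; in the bracket, $\tfrac{\lambda_2}{2n}=\Omega(n^{(c_2-1)/2})\to 0$, $\tfrac{\lambda_1}{2n}\mathbf a_{(1)}=O(\lambda_1)=o(1)$ because $\mathbf a_{(1)}$ is $O(n)$ (it governs the $\lambda_1$-penalty term of Eq.~(\ref{eq:model 3}), of order $n$ per the discussion after Assumption~\ref{ass:lambda rate}), and $\tfrac1n\mathbf W_{(1)}$ vanishes in probability; hence $\|\hat\btheta_{(1)}-\btheta^{*}_{(1)}\|_\infty\to 0$, eventually below $\min_{l\le r}\theta_l^{*}>0$. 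For $E_2$, inserting the witness into $\nabla_l h_{\bu^{*}}$ for $l>r$, the $\lambda_2$ contributions collapse to $\lambda_2\bigl(1-[\bC^n_{\bu^{*},21}(\bC^n_{\bu^{*},11})^{-1}sign(\btheta^{*}_{(1)})]_{l-r}\bigr)$, which Assumption~\ref{ass:irrepresentable} bounds below by $\lambda_2\xi_{\min}$ with $\xi_{\min}=\min_j\xi_{\bu^{*},j}>0$; the remaining terms are the noise piece $2W_l-2[\bC^n_{\bu^{*},21}(\bC^n_{\bu^{*},11})^{-1}\mathbf W_{(1)}]_{l-r}$, of order $O_p(\sqrt n)$, plus deterministic $\lambda_1$-weighted pieces of order $O(\lambda_1 n)=O(n^{(1+c_1)/2})$. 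Since $\lambda_2=\Omega(n^{(1+c_2)/2})$ dominates both $\sqrt n$ and $n^{(1+c_1)/2}$ (this is where $c_1<c_2$ enters), $\nabla_l h_{\bu^{*}}(\hat\btheta)\ge 0$ as soon as the noise piece stays below $\tfrac12\lambda_2\xi_{\min}$.

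It remains to upgrade the two in-probability statements to the exponential rate. Each $W_l$ is a centered Gaussian with variance $\sigma^2\sum_{i=1}^n\|\psi_i^{(l)}\|_{\cY}^2=\sigma^2 n\,\bC^n_{\bu^{*},ll}=O(n)$, so $P(|W_l|>t)\le 2\exp(-t^2/(Cn))$; choosing $t\asymp\lambda_2$ (a fixed fraction of $\lambda_2\xi_{\min}$) makes the exponent of order $\lambda_2^2/n=\Omega(n^{c_2})$, and a union bound over the fixed number $p$ of coordinates, applied both to $\mathbf W_{(1)}$ entering $E_1$ and to the residual entering $E_2$, gives $P(sign(\tilde\btheta)\ne sign(\btheta^{*}))=o(e^{-n^{c_2}})$. \textbf{The main obstacle} I expect is the noise analysis: one must make the functional white-noise inner products $\langle\epsilon_i,\psi_i^{(l)}\rangle_{\cY}$ precise, verify the variance is genuinely $O(n)$ --- this is exactly where the convergence $\bC^n_{\bu^{*}}\to\bC_{\bu^{*}}$ in Assumption~\ref{ass:regular condition} is needed rather than only the weaker Eq.~(\ref{eq:ass re 2}) --- and carry the Gaussian concentration uniformly so the failure probability has precisely the advertised order; a further subtlety is that the irrepresentable inequality of Assumption~\ref{ass:irrepresentable} is stated with an $n$-dependent $\bm\xi_{\bu^{*}}$, so one must pass to its limit (or assume $\xi_{\min}$ is bounded away from $0$ in $n$) to keep the $\lambda_2\xi_{\min}$ margin effective.
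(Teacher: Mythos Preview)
Your proposal is correct and follows essentially the same primal--dual witness argument as the paper's proof (Appendix~4), both adapting \cite{zhao2006model}: construct the oracle estimator on the active block, then verify positivity on the active coordinates and the KKT inequality on the inactive coordinates via Gaussian tail bounds on the noise inner products $W_l$, with the rate $\lambda_2^2/n=\Omega(n^{c_2})$ delivering the exponent. The only cosmetic differences are that the paper first relaxes to an unconstrained Lasso in $|\theta_l|$ and restores nonnegativity at the end, and routes the tail bound through convergence in distribution of $\bW^n_{\bu^{*}}$ (via Assumption~\ref{ass:regular condition}) before invoking $1-\Phi(t)<t^{-1}e^{-t^2/2}$, whereas you work directly with the nonnegative KKT system and finite-sample Gaussian concentration.
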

\begin{proof}
    The proof of Theorem \ref{thm:theta} is shown in Appendix 4.
\end{proof}

Theorem \ref{thm:theta} indicates that in the process of fixing $\bu = \bu^{*}$ to solve for $\btheta$, as $n$ approaches infinity, irrelevant covariate functions in the model will be discarded with probability 1, and our model will eventually select the correct covariate functions. Note that to ensure the validity of Theorem \ref{thm:theta},  Assumption \ref{ass:irrepresentable} and Eq. (\ref{eq:ass re 1}) in Assumption \ref{ass:regular condition} do not need to hold for all $\bu \in \cY^{n}$. They only need to hold for $\bu = \bu^{*}$. The reason for introducing these two stronger assumptions is to facilitate the subsequent discussion on the convergence of $\btheta$ when $\bu$ is not fixed.

Now we want to generalize Theorem \ref{thm:theta} to the case with $\hat{\btheta}$ defined in Eq. (\ref{eq:model 3}). Without fixing $\bu$, we need additional assumptions on $\sigma^2$, $\lambda_1$ and $\lambda_2$.

\begin{theorem}
    \label{thm:theta global}
Under Assumptions \ref{ass:true value}, \ref{ass:lambda rate},\ref{ass:irrepresentable} and \ref{ass:regular condition}, if we further have $\sigma^2 = o(n^{c_2 -1})$ and $c_1 < \frac{c_1+c_2+2}{4} < c_2$, the optimal solution of Eq.  (\ref{eq:model 3}) will satisfy
   \begin{equation}
        P(sign(\hat{\btheta}) = sign(\btheta^{*})) = 1 - o(e^{-n^{c_2}}) \to 1, as\  n \to \infty,
   \end{equation}
   where $c_1, c_2$ is the constant in Assumption \ref{ass:lambda rate}.
\end{theorem}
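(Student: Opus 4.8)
The plan is to deduce Theorem~\ref{thm:theta global} from Theorem~\ref{thm:theta}: at a global minimizer of $q$, the component $\hat{\btheta}$ is the (unique, by strict convexity of $h_{\hat{\bu}}$, cf.\ Appendix~6) minimizer over $\btheta\ge\mathbf{0}$ of the lasso-penalized quadratic $h_{\hat{\bu}}$, so it suffices to rerun the sign-recovery argument behind Theorem~\ref{thm:theta} with the random $\hat{\bu}$ in place of the fixed $\bu^{*}$ and to show that this perturbation is, with probability $1-o(e^{-n^{c_2}})$, too small to alter the recovered sign pattern. First I would obtain a priori control on the minimizer. Minimizing $q(c\bu,\btheta/c)$ over $c>0$ shows the global minimizer satisfies the scale identity $\lambda_1\sum_l\hat{\theta}_lA_l(\hat{\bu})=\lambda_2\sum_l\hat{\theta}_l$, where $A_l(\bu):=\sum_{i,j}\langle K_l(x_i^{(l)},x_j^{(l)})u_i,u_j\rangle_{\cY}$; since the whole ray $\{(c\bu^{*},\btheta^{*}/c):c>0\}$ lies in $S^{*}$ and the sign pattern is constant on $S^{*}$, I fix the representative $(\bu^{*},\btheta^{*})\in S^{*}$ obeying the same identity. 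For it, $q(\bu^{*},\btheta^{*})=\sum_i\|\epsilon_i\|^2_{\cY}+2\sqrt{\lambda_1\lambda_2R}$ with $R:=(\sum_l\theta_l^{*}A_l(\bu^{*}))(\sum_l\theta_l^{*})$ scale-invariant and $O(n)$ by the discussion after Assumption~\ref{ass:lambda rate}; since $\sum_i\|\epsilon_i\|^2_{\cY}$ concentrates at $n\sigma^2|\Omega_y|$ and $\sqrt{n\lambda_1\lambda_2}$ is of order $n^{(c_1+c_2+2)/4}$, this is $O_P(\sqrt{n\lambda_1\lambda_2}+n\sigma^2)$, which the hypotheses $\sigma^2=o(n^{c_2-1})$ and $(c_1+c_2+2)/4<c_2$ render $o_P(n^{c_2})$. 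Hence from $q(\hat{\bu},\hat{\btheta})\le q(\bu^{*},\btheta^{*})$ I extract (a) $\sum_i\|\sum_l\hat{\theta}_l\sum_j K_l(x_j^{(l)},x_i^{(l)})\hat{u}_j-\bbE(y_i)\|^2_{\cY}=o_P(n^{c_2})$, (b) $\sum_l\hat{\theta}_l=o_P(1)$, and (c) $\lambda_1\sum_l\hat{\theta}_lA_l(\hat{\bu})=o_P(n^{c_2})$.

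Next I would turn (a) into an estimation bound for $\hat{\bu}$. Stationarity of $q$ in $\bu$ gives $\hat{\bu}=(\mathbf{K}_{\hat{\btheta}}+\lambda_1 I)^{-1}\mathbf{y}$ with $\mathbf{K}_{\hat{\btheta}}=\mathbf{G}_{\hat{\btheta}}\otimes T$, exactly the subproblem solved in Section~\ref{sec:solve_u}; writing $\mathbf{y}=\mathbf{K}_{\btheta^{*}}\bu^{*}+\bm{\epsilon}$, so that $\hat{\bu}-\bu^{*}=(\mathbf{K}_{\hat{\btheta}}+\lambda_1 I)^{-1}[(\mathbf{K}_{\btheta^{*}}-\mathbf{K}_{\hat{\btheta}})\bu^{*}-\lambda_1\bu^{*}+\bm{\epsilon}]$, I would bound each term using Assumption~\ref{ass:inverse operator} (bounded spectrum of $\mathbf{G}_{\hat{\btheta}}$, finite rank of $T$, hence $(\mathbf{K}_{\hat{\btheta}}+\lambda_1 I)^{-1}$ has spectrum in a controlled range), the eigen-decomposition machinery already used to prove Theorem~\ref{thm:u}, and the bounds of Step~1 --- the term $(\mathbf{K}_{\btheta^{*}}-\mathbf{K}_{\hat{\btheta}})\bu^{*}$ being linear in $\btheta^{*}-\hat{\btheta}$, which (a) and the pinned scale control. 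The outcome I expect is $\tfrac1n\sum_i\|\hat{u}_i-u_i^{*}\|^2_{\cY}\to_P 0$ at an explicit polynomial rate.

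Finally I would run the primal--dual argument of Theorem~\ref{thm:theta} for $h_{\hat{\bu}}$. Its deterministic inputs --- the irrepresentable inequality~(\ref{eq:ass irre}) and $\mathbf{C}^n_{\hat{\bu}}\to\mathbf{C}_{\hat{\bu}}\succ 0$ from~(\ref{eq:ass re 1}) --- transfer verbatim, since Assumptions~\ref{ass:irrepresentable} and~\ref{ass:regular condition} are postulated for every $\bu\ne\mathbf{0}$. The new work is to dominate, by the lasso margin $\lambda_2=\Omega(n^{(1+c_2)/2})$, two perturbations relative to the fixed-$\bu^{*}$ case: the deterministic gap between $\tilde{\mathbf{K}}_{\hat{\bu}},\tilde{\mathbf{D}}_{\hat{\bu}}$ and $\tilde{\mathbf{K}}_{\bu^{*}},\tilde{\mathbf{D}}_{\bu^{*}}$ (the mean part of the latter), which is quadratic in $\bu$ and so, after dividing by $n$, is controlled by Step~2 and~(\ref{eq:ass re 2}); and the Gaussian part $-2\sum_i\langle\epsilon_i,\sum_j K_l(x_j^{(l)},x_i^{(l)})\hat{u}_j\rangle_{\cY}$ of $\tilde{\mathbf{D}}_{\hat{\bu}}$, which conditionally on $\hat{\bu}$ is centered Gaussian with variance $\lesssim\sigma^2\sum_i\|\sum_j K_l(x_j^{(l)},x_i^{(l)})\hat{u}_j\|^2_{\cY}$, and this is $o(n^{1+c_2})$ by Step~1(c), equation~(\ref{eq:ass re 2}) and $\sigma^2=o(n^{c_2-1})$, so a Gaussian tail bound makes the event that it exceeds a fixed fraction of $\lambda_2$ have probability $o(e^{-n^{c_2}})$. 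The arithmetic constraints $c_1<(c_1+c_2+2)/4<c_2$ are exactly what keeps both perturbations below the relevant thresholds, so the strict positivity on the support and strict dual feasibility off the support proved in Theorem~\ref{thm:theta} persist, yielding $P(sign(\hat{\btheta})=sign(\btheta^{*}))\ge 1-o(e^{-n^{c_2}})\to 1$. \emph{The main obstacle} is Step~2 together with this scale bookkeeping: because $q$ is not jointly convex and $(\bu,\btheta)$ is identifiable only up to the reciprocal rescaling of Assumption~\ref{ass:true value}, one must first pin $\|\hat{\bu}\|$ via the scaling first-order condition, then propagate the estimation error through the quadratic forms $\tilde{\mathbf{K}}_{\bu},\tilde{\mathbf{D}}_{\bu}$ so that it stays negligible against the vanishing margin $\lambda_2/n=\Omega(n^{(c_2-1)/2})$ and against $n^{c_2}$ in the tail exponent --- precisely where the exact exponent conditions on $c_1,c_2$ and the noise condition $\sigma^2=o(n^{c_2-1})$ are forced.
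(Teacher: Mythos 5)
Your Step~1 is, up to presentation, exactly the lemma the paper proves first in its Appendix~5: the paper also compares $q(\hat{\bu},\hat{\btheta})$ with $q(c\bu^{*},\btheta^{*}/c)$ at the AM--GM-optimal scale $c$, uses $\sigma^{2}=o(n^{c_2-1})$ and $c_2>\tfrac{c_1+c_2+2}{4}$ in precisely the way you do, and concludes that the fit error $\sum_{i}\|\Delta_i\|_{\cY}^{2}=o(n^{c_2})$ with $\Delta_i=\sum_{l,j}\theta_l^{*}K_l(x_j^{(l)},x_i^{(l)})u_j^{*}-\sum_{l,j}\hat{\theta}_lK_l(x_j^{(l)},x_i^{(l)})\hat{u}_j$ (the paper argues by contradiction, you argue directly; same content). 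Your endgame is also the paper's: since $\hat{\btheta}$ minimizes $h_{\hat{\bu}}$, rerun the sign-recovery argument of Theorem~\ref{thm:theta} with $\hat{\bu}$ in place of $\bu^{*}$, which is legitimate precisely because Assumptions~\ref{ass:irrepresentable} and~\ref{ass:regular condition} are imposed for every $\bu$.

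Where you diverge is in how the perturbation from $\hat{\bu}\neq\bu^{*}$ is handled, and that is where your plan has gaps the paper's route avoids. The paper never needs $\hat{\bu}\to\bu^{*}$: it keeps all KKT quantities evaluated at $\hat{\bu}$ (so $\bC^{n}_{\hat{\bu}}$ is used as is, never compared with $\bC^{n}_{\bu^{*}}$) and observes that the only change relative to Theorem~\ref{thm:theta} is that the noise vector $\bW^{n}_{\hat{\bu}}$ acquires a mean whose $l$-th entry is $\tfrac{1}{\sqrt{n}}\sum_i\langle\sum_j K_l(x_j^{(l)},x_i^{(l)})\hat{u}_j,\Delta_i\rangle_{\cY}$; Cauchy--Schwarz together with the fit-error lemma bounds this by a quantity negligible against the tail-probability thresholds, and the argument closes. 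Your Step~2 and the deterministic half of Step~3 instead route through consistency of $\hat{\bu}$ for $\bu^{*}$ and a comparison of $\tilde{\mathbf{K}}_{\hat{\bu}},\tilde{\mathbf{D}}_{\hat{\bu}}$ with their $\bu^{*}$ counterparts. Two concrete problems: (i) in Step~2 the term $(\mathbf{K}_{\btheta^{*}}-\mathbf{K}_{\hat{\btheta}})\bu^{*}$ requires control of $\hat{\btheta}-\btheta^{*}$, which your bounds (a)--(c) do not supply (they control only the product $\mathbf{K}_{\hat{\btheta}}\hat{\bu}$), so this step is essentially circular as sketched; (ii) your pinning of the representative of $S^{*}$ by the first-order condition in $c$ forces the pinned $\btheta^{*}$ (and, by your own (b), $\hat{\btheta}$) to shrink with $n$, on the order of $n^{-(c_2-c_1)/4}$, so the margin $\sqrt{n}|\btheta^{*}(1)|$ that drives the Gaussian tail bound in the Theorem~\ref{thm:theta} argument is no longer automatically of the right order to deliver $o(e^{-n^{c_2}})$ without extra conditions on $c_1,c_2$; the paper sidesteps both issues by fixing an arbitrary representative of $S^{*}$ once (as discussed after Assumption~\ref{ass:true value}) and working only with the fit error. (Your treatment of $\sum_i\langle\epsilon_i,\cdot\rangle_{\cY}$ as conditionally centered Gaussian given $\hat{\bu}$ ignores the dependence of $\hat{\bu}$ on the noise, but the paper's own proof makes the same simplification, so I do not count that against you.)
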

\begin{proof}
The proof of Theorem \ref{thm:theta global} is shown in Appendix 5.
\end{proof}

Since we are simultaneously solving for $\bu$ and $\btheta$, this presents a certain challenge to the convergence of $\btheta$. Theorem \ref{thm:theta global} illustrates that when the noise variance tends to zero as $n$ to infinity, and the regularization parameters $\lambda_1$ and $\lambda_2$ are properly set, our model can eventually select the truly relevant covariate functions with a probability of 1.

The above discussions guarantee theoretical properties of the optimal solution of Eq. (\ref{eq:model 3}). However, because Eq. (\ref{eq:model 3}) is not convex, it is hard to guarantee 
Algorithm \ref{alg:final BCD} can find the optimal solution of Eq. (\ref{eq:model 3}) in reality. To discuss the convergence of Algorithm \ref{alg:final BCD}, we first give the following definitions.
\begin{definition}[strongly convex]
    A function $f$ is strongly convex with modulus $\nu > 0$ if for any $\mathbf{x}, \mathbf{y} \in$ dom($f$),
    \begin{equation}
        f(\mathbf{y})-f(\mathbf{x}) \geq \langle\nabla f(\mathbf{x}),\mathbf{y}-\bx \rangle + \frac{\nu}{2}||\mathbf{y}-\bx ||^2,
    \end{equation}
    where dom($f$) denotes the domain of $f$.
\end{definition}
Note that if $\nu = 0$, the strongly convex degenerates to convex. 

\begin{definition}[multi-convex]
    Suppose $(\bx_1,\cdots,\bx_s)$ is a partition of $\bx$, a function $f(\bx)$ is called multi-convex if for each $i$, $f(\bx)$ is a convex function of $\bx_i$ while all other blocks, i.e. $\bx_j, j \neq i$, are fixed. 
\end{definition}

For Eq. (\ref{eq:model 3}), when fixing $\btheta$, $m_{\btheta}(\bu)$ is convex w.r.t. $\bu$. When fixing $\bu$, $h_{\bu}(\btheta)$ is convex w.r.t. $\btheta$. Thus we can say $q(\bu,\btheta)$ is multi-convex.

We can easily prove that $h_{\bu_k}(\btheta)$ and $m_{\btheta_k}(\bu)$ are convex, as shown in Appendix 6. However, to get good properties of Algorithm \ref{alg:final BCD}, we need a stronger condition as shown in Assumption \ref{ass:strongly convex}.

\begin{assumption}
    \label{ass:strongly convex}
    $(\bu_{k},\btheta_k)_{k \geq 0}$ is a sequence generated by Algorithm \ref{alg:final BCD}. We assume
    \begin{equation*}
    \begin{aligned}
        &\text{(i)} m_{\btheta_{k}}(\bu) \text{is strongly convex of } \bu \text{ with modulus } M_{k}, \text{ and } 0<\underline{M} \leq M_k \leq \overline{M}.\\
        &\text{(ii)} h_{\bu_{k}}(\btheta) \text{is strongly convex of } \btheta \text{ with modulus } H_{k}, \text{ and } 0<\underline{H} \leq H_k \leq \overline{H}.
    \end{aligned}
    \end{equation*}
Here $\underline{M},\overline{M},\underline{H},\overline{H}$ are constants.
\end{assumption}

In Assumption \ref{ass:strongly convex}, $\underline{M}$ and $\underline{H}$ can guarantee that $M_k \neq 0$ and $H_k \neq 0$, which means that strongly convex will not degenerate to convex. $\overline{M}$ and $\overline{K}$ can guarantee $M_k$ and $H_k$ are bounded which is useful in the selection of $(\bu_0,\btheta_0)$. 

\begin{theorem}
\label{thm:BCD}
    Under Assumption \ref{ass:strongly convex}, the limit point of $(\bu_k, \theta_k)_{k \geq 0}$ is the stationary point of Eq. (\ref{eq:model 3}).
\end{theorem}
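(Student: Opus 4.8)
The plan is to cast Algorithm \ref{alg:final BCD} as a two-block Gauss--Seidel (block coordinate descent) scheme for the multi-convex function $q(\bu,\btheta)$ and invoke the standard convergence theory for such schemes (in the spirit of Bertsekas's block-descent results and the analysis of multi-convex problems, e.g.\ Xu and Yin). The key structural facts we are handed for free are: (i) $q$ is multi-convex, so each block update is the exact minimizer of a convex subproblem; (ii) by Assumption \ref{ass:strongly convex}, each block subproblem is in fact \emph{strongly} convex with modulus uniformly bounded away from $0$, so each block minimizer is unique and well defined; and (iii) the sequence $q(\bu_k,\btheta_k)$ is monotonically non-increasing because each half-step can only decrease the objective. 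I would first record these as preliminary observations, noting in particular that Algorithm \ref{alg:solve u} returns the exact minimizer of $m_{\btheta_{k-1}}$ (closed form via the eigendecomposition) and that Algorithm \ref{alg:solve theta}, being a descent method for the convex problem \eqref{eq:solve theta}, drives $h_{\bu_{k-1}}$ to its minimizer.

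Next I would establish a \emph{sufficient-decrease} inequality. Using strong convexity of $m_{\btheta_{k-1}}$ with modulus $M_{k-1}\ge\underline M$ and the fact that $\bu_k$ is its minimizer (so $\nabla_{\bu} m_{\btheta_{k-1}}(\bu_k)=0$ in the unconstrained $\bu$-direction), one gets
\begin{equation*}
q(\bu_{k-1},\btheta_{k-1}) - q(\bu_k,\btheta_{k-1}) \;=\; m_{\btheta_{k-1}}(\bu_{k-1}) - m_{\btheta_{k-1}}(\bu_k) \;\ge\; \frac{\underline M}{2}\,\|\bu_k-\bu_{k-1}\|^2,
\end{equation*}
and symmetrically, using strong convexity of $h_{\bu_k}$ with modulus $H_k\ge\underline H$ (and projecting onto the nonnegative orthant, where the first-order optimality condition for $\btheta_k$ still yields the required inequality),
\begin{equation*}
q(\bu_k,\btheta_{k-1}) - q(\bu_k,\btheta_k) \;\ge\; \frac{\underline H}{2}\,\|\btheta_k-\btheta_{k-1}\|^2.
\end{equation*}
Summing, $q(\bu_{k-1},\btheta_{k-1}) - q(\bu_k,\btheta_k) \ge \tfrac{1}{2}\min(\underline M,\underline H)\big(\|\bu_k-\bu_{k-1}\|^2+\|\btheta_k-\btheta_k\|^2\big)$. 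Since $q\ge 0$ is bounded below, the partial sums telescope and force $\|\bu_k-\bu_{k-1}\|\to 0$ and $\|\btheta_k-\btheta_{k-1}\|\to 0$; moreover $q(\bu_k,\btheta_k)$ converges to some limit $q^\star$.

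Finally I would pass to a limit point. Let $(\bar\bu,\bar\btheta)$ be a limit point of the bounded(-on-a-subsequence) sequence, say $(\bu_{k_j},\btheta_{k_j})\to(\bar\bu,\bar\btheta)$; because the successive differences vanish, the neighboring iterates along the subsequence converge to the same point. The optimality condition for the $\bu$-update reads $\nabla_{\bu} m_{\btheta_{k_j-1}}(\bu_{k_j})=0$, i.e.\ $\partial_{\bu} q(\bu_{k_j},\btheta_{k_j-1})=0$; taking $j\to\infty$ and using joint continuity of $\nabla q$ gives $\partial_{\bu} q(\bar\bu,\bar\btheta)=0$. The optimality condition for the $\btheta$-update is the nonnegativity-constrained stationarity $\langle\nabla_{\btheta} h_{\bu_{k_j}}(\btheta_{k_j}),\,\btheta-\btheta_{k_j}\rangle\ge 0$ for all $\btheta\ge\mathbf 0$; passing to the limit yields $\langle\nabla_{\btheta} q(\bar\bu,\bar\btheta),\,\btheta-\bar\btheta\rangle\ge 0$ for all $\btheta\ge\mathbf 0$. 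Together these two conditions are exactly the (KKT) stationarity conditions for \eqref{eq:model 3}, so $(\bar\bu,\bar\btheta)$ is a stationary point.

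The main obstacle I anticipate is the limiting argument in the infinite-dimensional block: $\bu$ lives in $(\cY)^n$ with $\cY$ an $L^2$ space, so a ``limit point'' requires care — boundedness of $\{\bu_k\}$ (which should follow from the sufficient-decrease inequality plus coercivity of $q$ in $\bu$ for fixed $\btheta$, using $\lambda_1>0$ and the strong-convexity moduli bounded below) gives only weak compactness, and one must check that the first-order optimality map is continuous in the relevant topology — or, exploiting that $T$ has finite rank $\kappa$ (Assumption \ref{ass:inverse operator}), argue that the iterates in fact lie in a fixed finite-dimensional subspace $\mathrm{span}\{w_1,\dots,w_\kappa\}^n$, reducing everything to $\bbR^{n\kappa}\times\bbR^p$ where ordinary compactness applies. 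I would take the latter route, as it sidesteps the functional-analytic subtleties cleanly. A secondary point worth spelling out is verifying that Algorithm \ref{alg:solve theta} indeed terminates at the exact minimizer of the convex subproblem (rather than an approximate one); if only approximate, the sufficient-decrease bound degrades by a controllable error term and the conclusion is unchanged, which I would mention in a remark.
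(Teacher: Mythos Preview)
Your proposal is correct and follows essentially the same route as the paper: both rely on the multi-convex BCD convergence theory of \cite{xu2013block}, with the paper simply reformulating the nonnegativity constraint via an indicator function $r(\btheta)$ and then citing Theorem~2.3 of \cite{xu2013block} directly (using $Q\ge 0$ and Assumption~\ref{ass:strongly convex}), whereas you unpack that theorem's sufficient-decrease and limit-point argument in detail. Your treatment is in fact more careful than the paper's on the infinite-dimensional issue---the finite-rank reduction you suggest is a clean way to justify compactness that the paper does not spell out.
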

\begin{proof}
    The proof of Theorem \ref{thm:BCD} is shown in Appendix 7.
\end{proof}

Current research \citep{xu2013block} indicates that without constraints on the initial values, using the BCD algorithm to optimize such multi-convex functions can only guarantee convergence to a stationary point. That is, Theorem \ref{thm:BCD} states that under the conditions of Assumption \ref{ass:strongly convex}, the limit point of BCD algorithm will converge to a stationary point. To get global convergence, we need to add some assumptions.

\begin{definition}[Kurdyka-Lojasiewicz property]
    A function $f(\bx)$ satisfies the Kurdyka- Lojasiewicz (KL) property at point $\Bar{\bx} \in$ dom($f$), if in a certain neighbor $\mathcal{U}$ of $\Bar{\bx}$, there exists $\phi(s) = cs^{1-\rho}$ for some $c > 0$ and $\rho \in [0,1)$ such that 
    \begin{equation}
        \phi^{'}(|f(\bx) - f(\Bar{\bx})|) ||\nabla f(\bx) || \geq 1, \text{for any } \bx \in \mathcal{U}\cap \text{dom($f$) and } f(\bx) \neq f(\Bar{\bx}).
    \end{equation}
\end{definition}

Satisfying the KL property serves as an assumption in \cite{xu2013block} for discussing the convergence of the BCD algorithm for functions that are not globally convex but are multi-convex. It also points out that the assumption $f(\bx)$ satisfies KL property at a point $\Bar{\bx}$ is weaker than the assumption $f(\bx)$ is local strongly convex at $\Bar{\bx}$. 
In our case, verifying the KL condition of $q(\bu,\btheta)$ requires computing the norm of its derivatives. Since $\nabla q(\bu,\btheta) \in \cY^n \times \bbR^p$, we have the norm as $||\nabla q(\bu,\btheta)||_{\cY^{n}\times \bbR^p} = \sqrt{\sum_{i=1}^n|| \frac{\partial q(\bu,\btheta)}{\partial u_i}||_\cY^2} + \sqrt{\sum_{l=1}^p||\frac{\partial q(\bu,\btheta)}{\partial \theta_l}||^2}$ where the partial derivative with respect to $u_i$ is the Frechet derivative.

\begin{theorem}
\label{thm:BCD2}
    Under Assumption \ref{ass:strongly convex}, if $q(\bu,\btheta)$ satisfies the KL property at $(\hat{\bu},\hat{\btheta})$ and $(\bu_0,\btheta_0)$ is sufficiently close to $(\hat{\bu},\hat{\btheta})$, we have 
    \begin{equation*}
        (\bu_k,\btheta_k) \to (\hat{\bu},\hat{\btheta}) \ \text{as} \   k \to \infty.
    \end{equation*}
\end{theorem}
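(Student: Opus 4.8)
The plan is to follow the now-standard three-ingredient argument for convergence of a descent scheme under the Kurdyka--Łojasiewicz (KL) property (as in \cite{xu2013block}): establish (i) a \emph{sufficient-decrease} inequality, (ii) a \emph{relative-error} (subgradient) bound, and (iii) combine these with the KL inequality to show the iterate sequence has finite length, hence is Cauchy and converges; finally identify the limit with $(\hat{\bu},\hat{\btheta})$.

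\textbf{Step 1 (sufficient decrease).} Because $\bu_k$ minimizes $m_{\btheta_{k-1}}(\cdot)$, which by Assumption \ref{ass:strongly convex}(i) is strongly convex with modulus $M_{k-1}\ge\underline{M}$, the second-order growth property of strongly convex functions gives $m_{\btheta_{k-1}}(\bu_{k-1})-m_{\btheta_{k-1}}(\bu_k)\ge \tfrac{\underline{M}}{2}\|\bu_k-\bu_{k-1}\|_{\cY^n}^2$. Since $q(\cdot,\btheta_{k-1})$ and $m_{\btheta_{k-1}}(\cdot)$ differ only by the $\bu$-independent term $\lambda_2\sum_l\theta_{k-1,l}$, this is $q(\bu_{k-1},\btheta_{k-1})-q(\bu_k,\btheta_{k-1})\ge \tfrac{\underline{M}}{2}\|\bu_k-\bu_{k-1}\|^2$. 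Analogously, using $q(\bu_k,\cdot)=h_{\bu_k}(\cdot)$ and Assumption \ref{ass:strongly convex}(ii), $q(\bu_k,\btheta_{k-1})-q(\bu_k,\btheta_k)\ge \tfrac{\underline{H}}{2}\|\btheta_k-\btheta_{k-1}\|^2$. Adding, with $c_\ast=\min\{\underline{M},\underline{H}\}$,
\begin{equation*}
q(\bu_{k-1},\btheta_{k-1})-q(\bu_k,\btheta_k)\ \ge\ \tfrac{c_\ast}{2}\big(\|\bu_k-\bu_{k-1}\|^2+\|\btheta_k-\btheta_{k-1}\|^2\big).
\end{equation*}
Thus $q(\bu_k,\btheta_k)$ is nonincreasing and, being bounded below by $0$, converges to some $q_\infty$, and $\sum_k(\|\bu_k-\bu_{k-1}\|^2+\|\btheta_k-\btheta_{k-1}\|^2)<\infty$.

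\textbf{Step 2 (relative error bound).} The optimality of the $\bu$-update gives the Fréchet stationarity $\nabla_{\bu}q(\bu_k,\btheta_{k-1})=0$; the optimality of the constrained $\btheta$-update gives $-\nabla_{\btheta}q(\bu_k,\btheta_k)\in N_{\bbR^{p}_{+}}(\btheta_k)$, the normal cone of the nonnegative orthant. Hence, for a suitable normal-cone element $\mathbf{n}_k$, the pair $\mathbf{w}_k=\big(\nabla_{\bu}q(\bu_k,\btheta_k),\ \nabla_{\btheta}q(\bu_k,\btheta_k)+\mathbf{n}_k\big)$ lies in the limiting subdifferential $\partial q(\bu_k,\btheta_k)$, and $\nabla_{\btheta}q(\bu_k,\btheta_k)+\mathbf{n}_k=0$. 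So $\|\mathbf{w}_k\|=\|\nabla_{\bu}q(\bu_k,\btheta_k)-\nabla_{\bu}q(\bu_k,\btheta_{k-1})\|$, and since $\nabla_{\bu}q$ is affine in $\btheta$ with coefficients built from the bounded, finite-rank operators $K_l=g_lT$ (Assumption \ref{ass:inverse operator}), it is Lipschitz in $\btheta$ on bounded sets; using Assumption \ref{ass:strongly convex} to keep the iterates bounded, $\|\mathbf{w}_k\|\le b\big(\|\bu_k-\bu_{k-1}\|+\|\btheta_k-\btheta_{k-1}\|\big)$ for a constant $b$.

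\textbf{Step 3 (finite length and identification of the limit).} Let $\mathcal{U}$ be the neighborhood of $(\hat{\bu},\hat{\btheta})$ on which the KL property holds with desingularizing $\phi(s)=cs^{1-\rho}$. Choosing $(\bu_0,\btheta_0)$ close enough to $(\hat{\bu},\hat{\btheta})$, the standard induction of \cite{xu2013block}---bounding the cumulative motion by the total objective decrease via Step 1 and the $\phi$-telescoping below---keeps every iterate in $\mathcal{U}$, and we may take $q_\infty=q(\hat{\bu},\hat{\btheta})$. On $\mathcal{U}$, KL gives $\phi'(q(\bu_k,\btheta_k)-q_\infty)\,\|\mathbf{w}_k\|\ge1$; concavity of $\phi$ gives $\phi(q(\bu_k,\btheta_k)-q_\infty)-\phi(q(\bu_{k+1},\btheta_{k+1})-q_\infty)\ge \phi'(q(\bu_k,\btheta_k)-q_\infty)\,\big(q(\bu_k,\btheta_k)-q(\bu_{k+1},\btheta_{k+1})\big)$. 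Combining with Steps 1--2 and the AM--GM inequality yields $\|\bu_{k+1}-\bu_k\|+\|\btheta_{k+1}-\btheta_k\|\le \mathrm{const}\cdot\big[\phi(q(\bu_k,\btheta_k)-q_\infty)-\phi(q(\bu_{k+1},\btheta_{k+1})-q_\infty)\big]$, whose right side telescopes. Summing gives $\sum_k\big(\|\bu_{k+1}-\bu_k\|+\|\btheta_{k+1}-\btheta_k\|\big)<\infty$, so $(\bu_k,\btheta_k)$ is Cauchy in $\cY^n\times\bbR^{p}$ and converges to some $(\bu_\infty,\btheta_\infty)\in\mathcal{U}$; by Theorem \ref{thm:BCD} it is a stationary point, and since $(\bu_0,\btheta_0)$ is close to $(\hat{\bu},\hat{\btheta})$ and $q_\infty=q(\hat{\bu},\hat{\btheta})$, we conclude $(\bu_\infty,\btheta_\infty)=(\hat{\bu},\hat{\btheta})$.

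\textbf{Main obstacle.} The two delicate points are (a) the relative-error bound in Step 2 in the infinite-dimensional Hilbert-space setting: one must justify Lipschitz continuity of the Fréchet derivative $\nabla_{\bu}q$ with respect to $\btheta$, and handle the normal-cone term coming from the nonnegativity constraint on $\btheta$ cleanly so that $\mathbf{w}_k\in\partial q$; Assumption \ref{ass:inverse operator} is exactly what makes all the operators involved bounded. And (b) the ``stay in $\mathcal{U}$'' induction, which requires calibrating the radius of the initial neighborhood against the KL radius and the moduli $\underline{M},\underline{H},\overline{M},\overline{H}$ from Assumption \ref{ass:strongly convex}. The rest is bookkeeping in the spirit of \cite{xu2013block}.
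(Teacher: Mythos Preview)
Your proposal is correct and follows the same route as the paper: both rely on the KL-based descent framework of \cite{xu2013block}. The paper's own proof is much terser---it simply observes that $q$ is multi-convex, bounded below, and Lipschitz on bounded sets, then invokes Corollary~2.7 of \cite{xu2013block} directly (after absorbing the nonnegativity constraint into an indicator $r(\btheta)$ so that $Q=q+r$)---whereas you unpack the three ingredients (sufficient decrease, relative-error bound, finite-length via KL) that that corollary packages. One small remark: in Step~2 you invoke Assumption~\ref{ass:inverse operator} for the Lipschitz bound on $\nabla_{\bu}q$, but the theorem statement does not assume it; the paper instead argues that $q$ (being built from finitely many inner products and bounded operators $K_l(x_i,x_j)\in\cL(\cY)$) is automatically Lipschitz on any bounded set, which is all that is needed once the iterates stay in a bounded neighborhood of $(\hat{\bu},\hat{\btheta})$. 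Also note that your subgradient $\mathbf{w}_k$ really lives in $\partial Q$ (with the indicator), not $\partial q$, since $q$ itself is smooth; the paper handles this by working with $Q$ throughout.
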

\begin{proof}
    The proof of Theorem \ref{thm:BCD2} is shown in Appendix 7.
\end{proof}

Due to the multi-convex nature of our objective function $q(\bu,\btheta)$, it is not globally convex.  However, Theorem \ref{thm:BCD2} states that if the global optimum satisfies the KL property and appropriate initial values are chosen, our algorithm will converge to the global optimum. It should be noted that the closeness of $(\bu_0,\btheta_0)$ to $(\hat{\bu},\hat{\btheta})$ required in Theorem \ref{thm:BCD2} depends on the specific expressions of $\mathcal{U},\phi$ and $q$, which can be found in Appendix 7. 
\section{SIMULATION}
\label{sec:simulation}
In this section, we conduct extensive simulation studies to evaluate the performance of MF-RKHS. In particular, we consider two scenarios. In the first one, all the covariate and response functions are one-dimensional, which is also the experiment setting of \cite{cai2021variable}. In the second one, the covariate and response functions are multi-dimensional with different dimensionalities. So far, no function-on-function regression methods can deal with such scenario, and only tensor-regression based methods can be applied. 
As such, we compare MF-RKHS with the following two baselines: 
\begin{itemize}
    \item FoFLR \citep{cai2021variable}: It proposes a multivariate function-on-function linear regression model with variable selection when all the functions are one-dimensional. It uses FPCA for $y_i(t)$ and $x_i^{(l)}(t)$ and transforms the original problem into estimating the principle scores of FPCA. Then it uses $L^2$ norm group SCAD regularization for variable selection.
    \item MToT \citep{gahrooei2021multiple}: It proposes a multivariate linear tensor-on-tensor regression and uses Tucker decomposition to reduce the dimensions of the regression coefficients. Unfortunately, it can not do the variable selection, thus we only compare its fitting performance with ours. 
\end{itemize}
\subsection{One-dimensional regression}
\label{sec:case1}
We consider all $y_i$ and $x_i^{(l)}$ to be one-dimensional functions and compare MF-RKHS with FoFLR. We use the following equation to generate data:
\begin{equation}
\label{eq:generate data}
    y_i = \sum_{l = 1}^p\sum_{j = 1}^n \theta_l g_l(x_j^{(l)},x_i^{(l)})Tu_j + \epsilon_i, i = 1,\cdots,n.
\end{equation}
We set $p=5$ and define $x_i^{(1)}(t) = ie^{t},\ x_i^{(2)}(t) = sin(it)+e^t,\ x_i^{(3)}(t) = t^i+icos(t)/3,\ x_i^{(4)}(t) = log(i + t^2),\ x_i^{(5)}(t) = sin(cos(it)), i = 1,\cdots,n$. Regarding the kernel functions, we consider three settings: (1) the Gaussian kernel function, i.e., $g_l(x_i^{(l)},x_j^{(l)}) = e^{-\frac{||x_i^{(l)}-x_j^{(l)} ||^2}{\sigma^2_g}}$; (2) the Cauchy kernel function, i.e., $g_l(x_i^{(l)},x_j^{(l)}) = \frac{1}{1+\frac{||x_i^{(l)}-x_j^{(l)} ||^2}{\sigma^2_g}}$; and (3) the exponential kernel function, i.e., $g_l(x_i^{(l)},x_j^{(l)}) = e^{-\frac{||x_i^{(l)}-x_j^{(l)} ||}{\sigma^2_g}}$. $\sigma_g=1$ is the parameter of the kernel function. Once a specific kernel function is chosen, we set all $g_l, l = 1,\cdots,p$	
to be the same. We choose the operator $T$ as a projection operator, i.e.,  $Tu(t) = \sum_{h=1}^H\langle b_h,u \rangle_{\cY} b_h(t)$ where $b_h(t) = sin(2\pi h t)$ and $H = 50$. 
$u_j$ is defined as $u_j(t) = sin(2\pi j t),j = 1,\cdots,n, t\in [0,1]$. For the configuration of $\btheta$, define the set $M = \{l|  \theta_l = 0, l=1,\ldots, p\}$. We consider three settings, i.e., $M = \{5\}, \{1,3,5\}, \{1,3,4,5\}$. 
For $\theta_l \neq 0$, we generate it randomly from a uniform distribution $U(1, 2)$. 
For the standard deviation of $\epsilon$, i.e., $\sigma$, we also have two values, $0.1$ and $0.01$. Then we can use Eq. (\ref{eq:generate data}) to generate $y_i(t), i = 1,\cdots,n$. For each parameter configuration, we generate $n = 50$ samples and employ MF-RKHS and FoFLR for fitting and variable selection, respectively. For MF-RKHS, we set $\lambda_1 = 0.1$ and $\lambda_2 = 0.4$. We repeat the experiments for 100 replications. The frequency of accurate variable set selection over the 100 replications and the average fitting MSE, which is defined as $MSE = \frac{1}{n}\sum_{i = 1}^n ||\hat{y}_i(t) - y_i(t) ||^2_\cY$, are summarized in Table \ref{tab:small sacle 1} and Table \ref{tab:small sacle 2}.

\begin{table}
    \centering
    \caption{Simulation results for the one-dimensional case when $\sigma = 0.01$}{
    \begin{tabular}{lllllllll}
    \hline
        M &  $g_l$ & methods & $x^{(1)}$ & $x^{(2)}$ & $x^{(3)}$& $x^{(4)}$ &  $x^{(5)}$ & MSE  \\ 
        \hline
        \multirow{6}{*}{\{5\}}  &   \multirow{2}{*}{Exponential} &MF-RKHS & 100 & 100 & 100 & 100 & 0 & 0.0595  \\ 
        ~ &  ~ & FoFLR  & 100 & 100 & 0 & 100 & 100 & 11.266  \\ 
        ~ &  \multirow{2}{*}{Gaussian} & MF-RKHS & 100 & 100 & 100 & 100 & 0 & 0.0594  \\ 
        ~ &  ~ & FoFLR & 100 & 100 & 0 & 100 & 100 & 10.592  \\ 
        ~ &  \multirow{2}{*}{Cauthy} & MF-RKHS & 100 & 100 & 100 & 100 & 0 & 0.0596  \\ 
        ~ &  ~ & FoFLR & 100 & 100 & 0 & 100 & 100 & 10.791  \\ 
        \multirow{6}{*}{\{1,3,5\}}  & \multirow{2}{*}{Exponential} & MF-RKHS & 0 & 100 & 0 & 100 & 0 & 0.0523  \\ 
        ~  & ~ & FoFLR & 100 & 100 & 100 & 100 & 100 & 3.6465  \\
        ~  & \multirow{2}{*}{Gaussian} & MF-RKHS & 0 & 100 & 0 & 100 & 0 & 0.0516  \\ 
        ~  & ~ & FoFLR& 100 & 100 & 100 & 100 & 100 & 3.5895  \\ 
        ~  & \multirow{2}{*}{Cauthy} & MF-RKHS & 0 & 100 & 0 & 100 & 0 & 0.0517  \\ 
        ~  & ~ & FoFLR & 100 & 100 & 100 & 100 & 100 & 3.5529  \\ 
        \multirow{6}{*}{\{1,3,4,5\}} & \multirow{2}{*}{Exponential} & MF-RKHS & 0 & 100 & 0 & 0 & 0 & 0.0379  \\ 
        ~  & ~ & FoFLR & 100 & 38 & 100 & 99 & 100 & 0.9406  \\
        ~  & \multirow{2}{*}{Gaussian} & MF-RKHS & 0 & 100 & 0 & 0 & 0 & 0.0259  \\ 
        ~  & ~ & FoFLR & 100 & 61 & 100 & 100 & 100 & 0.9561  \\ 
        ~  & \multirow{2}{*}{Cauthy} & MF-RKHS & 0 & 100 & 0 & 0 & 0 & 0.0284  \\ 
        ~  & ~ & FoFLR & 100 & 58 & 100 & 100 & 100 & 0.9092  \\ 
        \hline
    \end{tabular}}
    \label{tab:small sacle 1}
\end{table}

\begin{table}
    \centering
    \caption{Simulation results for the one-dimensional case when $\sigma = 0.1$}{
    \begin{tabular}{lllllllll}
    \hline
        M &  $g_l$ & methods & $x^{(1)}$ & $x^{(2)}$ & $x^{(3)}$& $x^{(4)}$ &  $x^{(5)}$ & MSE  \\ 
        \hline
        \multirow{6}{*}{\{5\}}  & \multirow{2}{*}{Exponential} & MF-RKHS & 100 & 100 & 100 & 100 & 0 & 0.3118  \\ 
        ~ & ~ & FoFLR & 100 & 100 & 0 & 100 & 100 & 11.192  \\ 
        ~  & \multirow{2}{*}{Gaussian} & MF-RKHS & 100 & 100 & 100 & 100 & 0 & 0.3118  \\ 
        ~  & ~ & FoFLR & 100 & 100 & 0 & 100 & 100 & 10.615  \\ 
        ~  & \multirow{2}{*}{Cauthy} & MF-RKHS & 100 & 100 & 100 & 100 & 0 & 0.3119  \\ 
        ~  & ~ & FoFLR & 100 & 100 & 0 & 100 & 100 & 10.846  \\ 
        \multirow{6}{*}{\{5\}}  & \multirow{2}{*}{Exponential} & MF-RKHS & 0 & 100 & 0 & 100 & 0 & 0.3048  \\ 
         ~  & ~ & FoFLR & 100 & 100 & 100 & 100 & 100 & 3.6702  \\
        ~  & \multirow{2}{*}{Gaussian} & MF-RKHS & 0 & 100 & 0 & 100 & 0 & 0.3042  \\ 
        ~  & ~ & FoFLR & 100 & 100 & 100 & 100 & 100 & 3.5917  \\ 
        ~  & \multirow{2}{*}{Cauthy} & MF-RKHS & 0 & 100 & 0 & 100 & 0 & 0.3044  \\ 
        ~  & ~ & FoFLR & 100 & 100 & 100 & 100 & 100 & 3.5690  \\ 
         \multirow{6}{*}{\{5\}}  & \multirow{2}{*}{Exponential} & MF-RKHS & 0 & 100 & 0 & 0 & 0 & 0.3049  \\ 
        ~  & ~ & FoFLR & 100 & 48 & 100 & 100 & 100 & 0.9509 \\ 
        ~  & \multirow{2}{*}{Gaussian} & MF-RKHS & 0 & 100 & 0 & 0 & 0 & 0.3551  \\ 
        ~  & ~ & FoFLR & 100 & 60 & 100 & 100 & 100 & 0.9652  \\ 
        ~  & \multirow{2}{*}{Cauthy} & MF-RKHS & 0 & 100 & 1 & 0 & 2 & 0.3107  \\ 
        ~  & ~ & FoFLR & 100 & 60 & 100 & 97 & 100 & 0.9156 \\ 
        \hline
    \end{tabular}}
    \label{tab:small sacle 2}
\end{table}

The simulation results highlight the outstanding performance of our method across various parameter configurations, particularly in terms of variable selection accuracy. In contrast, FoFLR struggles to provide accurate estimates when dealing with data generated through nonlinear methods due to its consideration of only linear relationships between different variables. The inaccurate parameter estimation further leads to FoFLR's lower accuracy in variable selection.
\subsection{Multi-dimensional regression}
We consider that $y_i$ is a two-dimensional function and $x_i^{l},l = 1,\cdots,p$ has different dimensions. Then we compare our methods with baseline MToT in terms of MSE, since MToT can not do the variable selection, we only compare MSE. We also use Eq. (\ref{eq:generate data}) to generate simulation data. Similarly to Section \ref{sec:case1}, we also set $p = 5, n = 50$, and $x_i^{(1)}(t) = it,\ x_i^{(2)}(s,t) = (isin(is)+icos(t))/3,\ x_i^{(3)}(s,t,v) = s^{|i-25|}sin(iv) + t|i-25|^v, \ x_i^{(4)}(t) = isin(it)+tlog(i),\ x_i^{(5)}(s,t) = slog(i)+icos(it)$. We set $g_l$, $\btheta$, $M$, and $\epsilon$, in the same way as Section \ref{sec:case1}. The operator $T$ is also a projection operator, i.e., $Tu(s,t) = \sum_{h = 1}^H\sum_{l=1}^L \langle b_{hl},u\rangle_{\cY}b_{hl}(s,t)$ where $b_{hl}(s,t)=sin(2\pi hs)sin(2\pi lt)$, $H = 8$, and $L = 8$. 
$u_j(s,t) = sin(2\pi hs)sin(2\pi lt),j = 1,\cdots,n,(s,t)\in [0,1]^2$ where $h = [j/7],l = j - 7[j/7]$ and $[\cdot]$ indicates the integer part. Then we can use Eq. (\ref{eq:generate data}) to generate $y_i(s,t), i = 1,\cdots,n$. 
Since MToT is designed for multivariate tensor regression, during simulation, we need to discretize each dimension of each function, transforming them into tensors for estimation. Here, we discretize each dimension of each function into 100 partitions and assign corresponding weights for subsequent approximate integral calculations. As MToT involves Tucker decomposition, and its core tensor dimensions may affect the fitting performance, we maximize the dimensions of the core tensor to obtain the best fitting results.
We generate data repeatedly for 100 trials for each set of different configurations. We then employ MF-RKHS for fitting and variable selection, and MToT only for fitting. For MF-RKHS, we set $\lambda_1 = 0.1$ and $\lambda_2 = 0.4$. The results, including the frequency of variable selection for each independent variable over the 100 trials and the average MSE, are summarized in Table \ref{tab:big sacle 1} and Table \ref{tab:big sacle 2}.

\begin{table}
    \centering
    \caption{Simulation results for the multi-dimensional case when $\sigma = 0.01$}{
    \begin{tabular}{lllllllll}
    \hline
        M  & $g_l$ & methods & $x^{(1)}$ & $x^{(2)}$ & $x^{(3)}$& $x^{(4)}$ &  $x^{(5)}$ & MSE  \\ 
        \hline
        \multirow{6}{*}{\{5\}}   & \multirow{2}{*}{Exponential} &MF-RKHS & 100 & 100 & 100 & 100 & 0 & 0.1012  \\ 
        ~  & ~ &MToT &  / & / & / & / & / & 1.0612  \\ 
        ~  & \multirow{2}{*}{Gaussian} & MF-RKHS & 100 & 100 & 100 & 100 & 0 & 0.0937  \\ 
       ~  & ~ &MToT &  / & / & / & / & / & 1.0488  \\  
        ~  & \multirow{2}{*}{Cauthy} & MF-RKHS & 100 & 100 & 100 & 100 & 0 & 0.0951  \\ 
        ~  & ~ &MToT &  / & / & / & / & / & 1.0626  \\ 
        \multirow{6}{*}{\{1,3,5\}}  & \multirow{2}{*}{Exponential} & MF-RKHS & 0 & 100 & 0 & 100 & 0 & 0.0839  \\ 
        ~  & ~ &MToT &  / & / & / & / & / & 0.9100  \\
        ~  & \multirow{2}{*}{Gaussian} & MF-RKHS & 0 & 100 & 0 & 100 & 0 & 0.0696  \\ 
        ~  & ~ &MToT &  / & / & / & / & / & 0.9067  \\
        ~  & \multirow{2}{*}{Cauthy} & MF-RKHS & 0 & 100 & 0 & 100 & 0 & 0.0749  \\ 
        ~  & ~ &MToT &  / & / & / & / & / & 0.9108  \\
        \multirow{6}{*}{\{1,3,4,5\}} & \multirow{2}{*}{Exponential} & MF-RKHS & 0 & 100 & 0 & 0 & 0 & 0.0170  \\ 
        ~  & ~ &MToT &  / & / & / & / & / & 0.8907  \\
        ~  & \multirow{2}{*}{Gaussian} & MF-RKHS & 0 & 100 & 0 & 0 & 0 & 0.0051  \\ 
        ~  & ~ &MToT &  / & / & / & / & / & 0.8899  \\ 
        ~  & \multirow{2}{*}{Cauthy} & MF-RKHS & 0 & 100 & 0 & 0 & 0 & 0.0076  \\ 
        ~  & ~ &MToT &  / & / & / & / & / & 0.8898 \\ 
    \hline
    \end{tabular}}
    \label{tab:big sacle 1}
\end{table}

\begin{table}
    \centering
    \caption{Simulation results for the multi-dimensional case when $\sigma = 0.1$}{
    \begin{tabular}{lllllllll}
    \hline
        M  & $g_l$ & methods & $x^{(1)}$ & $x^{(2)}$ & $x^{(3)}$& $x^{(4)}$ &  $x^{(5)}$ & MSE  \\ 
        \hline
        \multirow{6}{*}{\{5\}}  & \multirow{2}{*}{Exponential} & MF-RKHS & 100 & 100 & 100 & 100 & 0 & 0.5934  \\ 
        ~  & ~ &MToT &  / & / & / & / & / & 1.0747  \\ 
        ~  & \multirow{2}{*}{Gaussian} & MF-RKHS & 100 & 100 & 100 & 100 & 0 & 0.5858  \\ 
        ~  & ~ &MToT &  / & / & / & / & / & 1.0578  \\
        ~  & \multirow{2}{*}{Cauthy} & MF-RKHS & 100 & 100 & 100 & 100 & 0 & 0.5872  \\ 
        ~  & ~ &MToT &  / & / & / & / & / & 1.0718  \\
        \multirow{6}{*}{\{5\}}  & \multirow{2}{*}{Exponential} & MF-RKHS & 0 & 100 & 0 & 100 & 0 & 0.5760  \\ 
         ~  & ~ &MToT &  / & / & / & / & / & 0.9193  \\
        ~  & \multirow{2}{*}{Gaussian} & MF-RKHS & 0 & 100 & 0 & 100 & 0 & 0.5619  \\ 
         ~  & ~ &MToT &  / & / & / & / & / & 0.9159  \\
        ~  & \multirow{2}{*}{Cauthy} & MF-RKHS & 0 & 100 & 0 & 100 & 0 & 0.5671  \\ 
       ~  & ~ &MToT &  / & / & / & / & / & 0.9199  \\
        \multirow{6}{*}{\{5\}}  & \multirow{2}{*}{Exponential} & MF-RKHS & 0 & 100 & 0 & 0 & 0 & 0.5099  \\ 
        ~  & ~ &MToT &  / & / & / & / & / & 0.9001  \\ 
        ~  & \multirow{2}{*}{Gaussian} & MF-RKHS & 0 & 100 & 0 & 0 & 0 & 0.4989  \\ 
        ~  & ~ &MToT &  / & / & / & / & / & 0.8992  \\ 
        ~  & \multirow{2}{*}{Cauthy} & MF-RKHS & 0 & 100 & 1 & 0 & 2 & 0.5006 \\ 
        ~  & ~ &MToT &  / & / & / & / & / & 0.8992  \\
        \hline
    \end{tabular}}
    \label{tab:big sacle 2}
\end{table}

The simulation results indicate that our algorithm still outperforms the baseline in fitting across different parameter configurations. Moreover, in the case of multi-dimensional function regression, our algorithm maintains high accuracy in variable selection. While MToT is applicable to multi-dimensional function regression, its fundamental consideration of linear correlations between different variables leads to suboptimal fitting results for simulated data generated through nonlinear methods.

\section{CASE STUDY}
\label{sec:case}
In this section, we validate the effectiveness of our model using a real-case dataset, i.e., the monthly averaged data on single levels of the ERA5 dataset (\url{https://cds.climate.copernicus.eu/cdsapp#!/dataset/reanalysis-era5-single-levels-monthly-means}). 
Climate data is a typical multi-dimensional functional dataset defined over spatial coordinates. For example, the cloud base height at each point within a given latitude and longitude region represents a two-dimensional function.
On the one hand, the ERA5 dataset comprises various variables. If we aim to predict a response function using multiple covariate functions, a subset of the covariate functions is likely irrelevant and should be discarded during the variable selection process. This emphasizes the need for our model to perform variable selection.
On the other hand, due to the complexity of climate data, using linear prediction models may not adequately capture all the correlations. In such cases, considering nonlinear regression relationships, as our model does, becomes essential.

We choose 10 two-dimensional functions as our covariate functions, i.e. $p = 10$, which are total column vertically-integrated water vapour (tcwv), downward UV radiation at the surface (uvb), total column cloud ice water (tciw), total column cloud liquid water (tclw), 2 metre dewpoint temperature (d2m), 2 metre temperature (t2m), vertical integral of divergence of cloud liquid water flux (p79), vertical integral of temperature (p54), vertical integral of mass of atmosphere (p53) and vertical integral of divergence of cloud frozen water flux (p80).  We chose total column supercooled liquid water (tclsw) as our response function which is also a two-dimensional function. 

We consider three separate experiments, using data from three different regions, the Pacific ($120^\circ W-150^\circ W, 0^\circ-30^\circ N$), the Atlantic ($0^\circ - 30^\circ W, 30^\circ S - 60^\circ S$), and the Indian Oceans ($60^\circ E-90^\circ E, 0^\circ-30^\circ S$). We perform fitting and variable selection separately on the data from these three regions. Within each region, for each function which is preprocessed by centralization and smoothing, we use data from January and July for the years from 1998 to 2022 as our training dataset, comprising a total of 50 samples, i.e. $n_{train} = 50$. Subsequently, we randomly select 20 samples from February and August for the years from 1998 to 2022 as our testing dataset, i.e. $n_{test} = 20$.
As all these functions are two-dimensional, we can only use MToT as our baseline for performance comparison. In our method, we still use Gaussian kernel, $g_l(x_i^{(l)},x_j^{(l)}) = e^{-\frac{||x_{i}^{(l)}-x_j^{(k)} ||^2}{\sigma_g^2}}, l = 1\cdots,p$, and set $T$ as the projection operator which projects onto the basis set $\{b_{hl}(s,t)=sin(2\pi hs)sin(2\pi lt)|h = 1\cdots,9,l = 1,\cdots,9\}$. For MF-RKHS, we set $\lambda_1 = 0.1$ and $\lambda_2 = 0.4$. For MToT, we select its core tensor dimension in the Tucker decomposition as the one that ensures its decomposition $MSE_{train}$ matches the $MSE_{train}$ of our method. Subsequently, we compare their $MSE_{test}$.

\begin{table}[hbt]
    \centering
    \caption{Variable selection and fitting results of ERA5 dataset}
    \resizebox{\linewidth}{!}{
    \begin{tabular}{llllllllllllll}
    \hline
    Region & Method & tcwv & uvb &tciw &tclw &d2m&t2m&p79&p54&p53&p80& $MSE_{train}$ & $MSE_{test}$\\
    \hline
    \multirow{2}{*}{Pacific}& MF-RKHS &\checkmark &\ding{55} &\checkmark &\checkmark &\checkmark &\checkmark &\checkmark & \checkmark& \checkmark& \ding{55}& 0.3869& 0.5630\\
    ~ & MToT & / &/&/&/&/&/&/&/&/&/&0.0961 &29.998 \\
    \multirow{2}{*}{Atlantic}& MF-RKHS &\checkmark &\ding{55} &\checkmark &\ding{55} &\checkmark &\checkmark &\checkmark & \checkmark& \checkmark& \checkmark& 0.3264& 0.6369 \\
    ~ & MToT & / &/&/&/&/&/&/&/&/&/&0.1089 &32.226 \\
    \multirow{2}{*}{Indian Oceans}& MF-RKHS &\checkmark &\ding{55} &\checkmark &\checkmark &\checkmark &\checkmark &\checkmark & \checkmark& \checkmark& \ding{55}& 0.4064& 0.7791 \\
    ~ & MToT & / &/&/&/&/&/&/&/&/&/&0.1541 &33.374\\
    \hline
    \end{tabular}}
    \label{tab:case}
\end{table}

From Table \ref{tab:case}, it's evident that even though MToT controls the core tensor dimensions to achieve a lower $MSE_{train}$ on the training dataset, compared to MF-RKHS, its fitting performance on the testing dataset is still significantly inferior to MF-RKHS. On the one hand, MF-RKHS incorporates variable selection, eliminating redundant variables, which improves its performance on the testing dataset. On the other hand, due to the non-uniqueness issue of Tucker decomposition \citep{pan2024low}, its matrix bases estimated on the training set may be no longer accurate for the testing set, further leading to poor model fitting. In contrast, MF-RKHS employs distance-based Gaussian kernel and projection operators, which exhibit good stability in handling differences between testing and training dataset, yielding better results.
Regarding variable selection, 
in all the three regions, uvb is a discarded predictor, because solar radiation is relatively stable in equatorial and nearby regions while less intense in higher latitude regions. tclw and p80 exhibit different behaviors across the regions. The selected Atlantic region is located in mid-to-high latitudes, where cold clouds and ice clouds are more prevalent, and the influence of liquid water clouds may be less significant compared to ice clouds. Consequently, tclw's impact is not significant and is discarded. The selected Pacific and Indian Ocean regions are in low latitudes, where ice clouds are less common, and liquid water clouds and mixed-phase clouds are more prevalent. In this environment, p80 has a minimal impact, leading to its exclusion from the model. These results align with meteorology knowledge\citep{lee2010characterization,hong2015characteristics}.

\section{CONCLUSION}
\label{sec:conslusion}
In this paper, we introduce an MF-RKHS model for non-linear multivariate multi-dimensional function-on-function regression, which also incorporates the lasso penalty for variable selection. A BCD optimization algorithm is proposed for the model estimation, and some theoretical properties of both the model and optimization algorithm are discussed. Finally, we validate the effectiveness of our model through simulations and its application to real climate data. Along this direction, in the future, we may also consider extending MF-RKHS to robust regression cases, such as quantile regressions, or applying it to functional outlier detection problems.

\begin{small}

	\bibliography{output}
    \bibliographystyle{chicago}
\end{small}

\end{document}